\renewcommand{\baselinestretch}{2}  
\newtheorem{theorem}{\bf Theorem}
\providecommand{\customgenericname}{}
\newcommand{\newcustomtheorem}[2]{%
  \newenvironment{#1}[1]
  {%
   \renewcommand\customgenericname{#2}%
   \renewcommand\theinnercustomgeneric{##1}%
   \innercustomgeneric
  }
  {\endinnercustomgeneric}
}
\newcommand{\xb}{\mathbf{x}}
\newcommand{\bx}{\bm{x}}
\newcommand{\Kb}{\mathbf{K}}
\newcommand{\Xb}{\mathbf{X}}
\newcommand{\Yb}{\mathbf{Y}}
\newcommand{\bX}{\bm{X}}
\newcommand{\0}{{\mathbf{0}}}
\def\bX{{\bf X}}
\def\bx{{\bf x}}
\def\b0{{\bf 0}}
\begin{document}

\def\spacingset#1{\renewcommand{\baselinestretch}%
{#1}\small\normalsize} \spacingset{1}

\title{\bf{\large A General Framework of Nonparametric Feature Selection in High-Dimensional Data }}
\author{\normalsize Hang Yu, Yuanjia Wang, and Donglin Zeng}
\date{}
\bigskip

\begin{titlepage}

\maketitle

\begin{abstract}
Nonparametric feature selection in high-dimensional data is an important and challenging problem in statistics and machine learning fields. Most of the existing methods for feature selection focus on parametric or additive models which may suffer from model misspecification. In this paper, we propose a new framework to perform nonparametric feature selection for both regression and classification problems. In this framework, we learn prediction functions through empirical risk minimization over a reproducing kernel Hilbert space. The space is generated by a novel tensor product kernel which depends on a set of parameters that determine the importance of the features. Computationally, we minimize the empirical risk with a penalty to estimate the prediction and kernel parameters at the same time. The solution can be obtained by iteratively solving convex optimization problems. We study the theoretical property of the kernel feature space and prove both the oracle selection property and the Fisher consistency of our proposed method. Finally, we demonstrate the superior performance of our approach compared to existing methods via extensive simulation studies and application to a microarray study of eye disease in animals.
\end{abstract}

\noindent%
{\it Keywords:} Tensor product kernel ; Reproducing kernel Hilbert space;  Fisher consistency; Oracle property; Variable selection
\vfill

\let\thefootnote\relax\footnotetext{Hang Yu is PhD candidate at Department of Statistics and Operation Research, University of North Carolina, Chapel Hill, NC 27599 (hangyu@live.unc.edu),
Yuanjia Wang is Professor at Department of Biostatistics, Columbia University, New York, NY 10032 (yw2016@cumc.columbia.edu), and Donglin Zeng is Professor at Department of Biostatistics, University of North Carolina, Chapel Hill, NC 27599 (dzeng@email.unc.edu). 
This research is supported by U.S. NIH grants NS073671, GM124104, and MH117458.}

\end{titlepage}

\newpage
\clearpage
\setcounter{page}{1}
\spacingset{1.55} 

\section{Introduction}
With  biotechnology advances in modern medicine, biomedical studies collecting complex data with a large number of features are becoming the norm. High-dimensional feature selection is an essential tool to allow using such data for disease prediction or precision medicine,  for instance, to discover a set of diagnostic biomarkers from neuroimaging measures for early prediction of neurodegenerative diseases, or to determine predictive  biomarkers for  effective management of type 2 diabetic patients' healthcare. Accurately identifying the subset of true important features is even more crucial and challenging than before in the fields of statistics and machine learning.

High-dimensional feature selection has been extensively studied for linear or generalized linear models in the past decades, and many methods have been developed including Lasso \citep{tibshirani1996lasso}, SCAD \citep{fan2001scad}, {
MCP \citep{zhang2010MCP} and \citep{wang2012parametric}. In these parametric models, the importance of individual features is characterized by non-null coefficients associated with them, so proper penalization can identify those non-null  coefficients with probability tending to one when the sample size increases. However, parametric model assumptions are likely to be incorrect for many biomedical data due to potential correlations and higher-order interactions among feature variables. In fact, applying these approaches to any simple transformation of feature variables may lead to very different feature selection results.

More recently,  increasing efforts have been devoted to  high-dimensional feature selection when parametric assumptions, especially linearity assumption, do not hold. Various approaches were proposed to select features based on measuring certain marginal dependency
(\citet{guyon2003selection}, \citet{fan2008sure}, \citet{fan2011nonparametric}, \citet{song2012dependence}, \citet{yamada2014HSIC},  \citet{urbanowicz2018rbf}).
For example,  nonparametric association between each feature and outcome was used for screening (\citet{fan2008sure}, \citet{fan2011nonparametric}, \citet{song2012dependence}).
\citet{RRCS} adopted a a robust rank correlation screening  method based on  marginal Kendall correlation coefficient. \citet{yamada2014HSIC} considered a feature-wise kernelized Lasso, namely HSICLasso, for capturing nonlinear dependency between features and outcomes. In this approach, after a Lasso-type regression of an output kernel matrix on each feature-wise kernel matrix, unimportant features with small marginal dependence in terms of a Hilbert-Schmidt independence criterion (HSIC) would be removed.
 However, all  methods based on marginal dependence may fail to select truly important variables since marginal dependency does not necessarily imply the significance of a feature when other features are also included for prediction, which is the case even for a simple linear model.
 
Alternatively, other approaches were proposed to relax parametric model assumptions and perform feature selection and prediction simultaneously. 
 \citet{Lin2006cosso} proposed COmponent Selection and Smoothing Operator (COSSO) to perform penalized variable selection based on smoothing spline ANOVA.  \citet{SPAM} studied feature selection in a sparse additive model (SpAM),  which assumed an additive model but allowed arbitrary nonparametric smoothers such as approximation in a reproducing kernel Hilbert space (RKHS) for each individual component function. \citet{Huang2010} considered spline approximation in the same model and adopted an adaptive group Lasso method to perform feature selection. Although both COSSO and SpAM allowed nonlinear prediction from each feature, they still imposed restrictive additive model structures, possible with some higher-order interactions. To allow arbitrary interactions among the features and perform a fully nonparametric prediction, \citet{allen2013kernel} propsed a procedure named KerNel Iterative Feature Extraction (KNIFE), in which the feature input was constructed in a Gaussian RKHS in order to perform nonparametric prediction. Different weights were used for different features in the constructed Gaussian kernel function so that a larger weight implied a higher importance of  the corresponding feature variable. However, due to high nonlinearity in the kernel function, estimating the weights was numerically unstable even when the dimension of the features was moderate.

In this paper, we propose a general framework to perform nonparametric high-dimensional feature selection. We consider a general loss function which includes both regression models and classification as special cases. To perform nonparametric prediction, we construct a novel RKHS based on a tensor product of kernels for individual features. The constructed tensor product kernel, as discussed in \citet{gao2012iwiee}, can handle any high-order nonlinear relationship between the features and outcome and any high-order interactions among the features. More importantly, each feature kernel depends on a non-negative parameter which determines the feature importance, so for feature selection, we further introduce a $l_1$-penalty of these parameters in the estimation. 
Computationally, coordinate descent algorithms are used for updating parameters and each step involves simple convex optimization problems.
Thus, our algorithm is numerically stable and can handle high-dimensional features easily.
Theoretically, we first derive the approximation property of the proposed RKHS and characterize the complexity of the unit ball in this space in terms of bracket covering numbers. We then show that the estimated prediction function from our approach is consistent and moreover, we show that under some regularity conditions, the important features can be selected with probability tending to one.

The rest of the paper is organized as follows. In Section 2, we introduce our proposed regularized tensor product kernel and lay out a penalized framework for both estimation and feature selection. We then provide detailed  computational algorithms to solve the optimization problem. In Section 3, we provide two theorems studying the property of the proposed RKHS. We then give the main result of this paper including the consistency of the estimated prediction function and the oracle property of the feature selection. In Section 4, two simulation studies for regression and classification problems are conducted and we compare our method to existing methods.  Application to a microarray study is given in Section 5. We conclude the paper with some discussion in Section 6.

\section{Method}

Suppose data are obtained from $n$ independent subjects  and consist of  $(\Xb_i, Y_i), i=1,...,n$, where we let $\Xb$ denote $p_n$-dimensional feature variables and $Y$ be the outcome which can be continuous, binary or ordinal. Our goal is to use the data to learn a nonparametric prediction function, $f(\Xb)$, for the outcome $Y$.

We  learn $f(\Xb)$  through a regularized empirical risk minimization by assuming $f(\cdot)$ belongs to a RKHS associated with a kernel function, $\kappa(\Xb, \widetilde{\Xb})$, which will be described later. Specifically, if we denote
the RKHS generated by  $\kappa(\Xb, \widetilde {\Xb})$  by $\mathcal{H}_{\kappa}$, equipped with norm  $\Vert \cdot\Vert_{\mathcal{H}_{\kappa}}$, then 
the empirical regularized risk minimization on RKHS for estimating $f(\Xb)$ solves the following optimization problem:
$$
\min_{f} {\bf P}_n l(Y,f(\Xb)) +\gamma_n\Vert f\Vert_{\mathcal{H}_{\kappa}}^2,
$$
where $l(y,f)$ a pre-specified non-negative and convex loss function to quantify the prediction performance, 
${\bf P}_n$ denotes the empirical measure from $n$ observations, i.e., ${\bf P}_ng(Y, \Xb)=n^{-1}\sum_{i=1}^n 
g(Y_i, \Xb_i)$, and
$\gamma_n$ is a tuning parameter to control the complexity of $f$.  For a continuous outcome, $l(y, f)$ is often chosen to be a $L_2$-loss given as $(y-f)^2$, while for a binary outcome, it can be one of the large-margin losses such as $\exp\{-yf\}$ in Adaboost.
There are many choices of kernel functions for $\kappa(\cdot, \cdot)$ so that the estimated $f(\Xb)$ is nonlinear.
One of the most commonly used kernel functions in machine learning is the Gaussian kernel function given by 
$\kappa(\Xb, \widetilde{\Xb})=\exp\left\{-\Vert \Xb-\widetilde{\Xb}\Vert^2/\sigma^2\right\}$ for some bandwidth $\sigma$, where $\Vert\cdot\Vert$ is the Euclidean norm.
To handle high-dimensional features, SpAM considered an additive kernel function by 
assuming $\kappa(\Xb, \widetilde{\Xb})=\sum_{j=1}^{p_n}\exp\left\{-\vert X_j-\widetilde X_j\vert^2/\sigma^2\right\}$. In the KNIFE procedure, the kernel function  is defined as $\kappa_{\bm{\omega}}(\Xb, \widetilde {\Xb}) = \exp\left\{-{\sum_{j=1}^{p_n} \omega_j(X_j -\widetilde{X}_j)^2}/{\sigma^2}\right\} $, where $\omega_j, j=1,.., p_n$ are the additional weights to determine the feature importance.

To achieve the goal of both nonparametric prediction and feature selection, we propose a tensor product kernel as follows. For any given nonnegative vector $\bm{\lambda} = (\lambda_1, \lambda_2, \cdots \lambda_{p_n})^\intercal$, we define a $\bm{\lambda}$-regularized kernel function  as
\begin{equation}
 \kappa_{\bm{\lambda}, \sigma_n}(\Xb,\widetilde\Xb) = \prod_{m=1}^{p_n}\left\{1+\lambda_m \kappa_n(X_{m}, \widetilde X_{m})\right\},
 \end{equation}
 where $\kappa_n(x,y)=\exp\left\{-(x-y)^2/2\sigma_n^2\right\}$ with a pre-defined bandwidth $\sigma_n$ in $\mathcal{R}$. 
 There are two important observations for this new kernel function. First,
it is a product of a univariate kernel function for each feature variable, which is given by $1+\lambda_m \kappa_n(X_m, \widetilde X_m)$. Thus, the RKHS generated by $\kappa_{\bm{\lambda}, \sigma_n}$ is equivalent to the tensor product of the RKHS generated by each feature-specific space.  Second, each univariate kernel function is essentially the same as the Gaussian kernel function when $\lambda_m\neq 0$. Consequently, the resulting tensor product space is the same as the RKHS generated by the multivariate Gaussian kernel function from all features whose $\lambda_m$'s are non-zero. Therefore, the closure for the RKHS generated by $\kappa_{\bm{\lambda}, \sigma_n}$ consists of all functions that only depend on feature variables for which $\lambda_m\neq 0$. In other words, 
non-negative parameters, $\lambda_m$, completely capture and regularize the contribution of  each feature $X_m$. 
In this way, the feature selection can be achieved by estimating the regularization parameters, $\lambda_m$'s, in the kernel function.

More specifically, using the proposed kernel function, we let ${\mathcal{H}_{\bm{\lambda}, \sigma_n}}$ denote the RKHS corresponding to  $\kappa_{\bm{\lambda}, \sigma_n}$ so we aim to minimize
\begin{equation}
\begin{aligned}
&L_n(\bm{\lambda}, f) \equiv && {\bf P}_n l(Y,f(\Xb)) +\gamma_{1n}||f||_{\mathcal{H}_{\bm{\lambda}, \sigma_n}}^2 +\gamma_{2n} P(\bm{\lambda})\\
&\text{subject to } && M\ge  \lambda_1, \lambda_2, \cdots, \lambda_{p_n} \geq 0,
\end{aligned}
\end{equation}
where $M$ is a pre-specified large constant.  $P(\bm{\lambda}) =\sum_{m=1}^{p_n} P(\lambda_m)=\sum_{m=1}^{p_n} \lambda_m I (\lambda_m < M/2),$ which is a truncated Lasso, and $\gamma_{1n}$, $\gamma_{2n} $ are tuning parameters. Here, we include an $l_1$ penalization term on the regularization vector to perform feature selection and restrict $\lambda_m$ to be bounded. The latter bound is useful for numerical convergence to avoid the situation that some $\lambda_m$ can diverge.
 Since our RKHS contains constant and based on the representation theory for RKHS, solution for (2) takes form  $$ f (\Xb)=\sum_{i=1}^n\alpha_i \kappa_{\bm{\lambda}, \sigma_n}(\Xb, \Xb_i) $$ and
$$\Vert f\Vert_{\mathcal{H}_{\bm{\lambda},\sigma_n}}^2=
\bm{\alpha}^T\Kb_{\bm{\lambda}, \sigma_n} \bm{\alpha},$$
where $\bm{\alpha}=(\alpha_1,...,\alpha_n)^T$ and $\Kb_{\bm{\lambda}, \sigma_n}$ is an $n\times n$ matrix with entry
$\kappa_{\bm{\lambda}, \sigma_n}(\Xb_i,\Xb_j).$ Then
the optimization becomes solving
\begin{equation*}
\begin{aligned}
&\min_{\alpha_1,...,\alpha_n, \bm{\lambda}}&
& {\bf P}_n l(Y,\sum_{i=1}^n\alpha_i \kappa_{\bm{\lambda}, \sigma_n}(\Xb, \Xb_i) ) +\gamma_{1n}\bm{\alpha}^T\Kb_{\bm{\lambda}, \sigma_n} \bm{\alpha}+ \gamma_{2n} \sum_{m=1}^{p_n}  \lambda_m I(\lambda_m < M/2)\\
&\text{subject to } && M\ge \lambda_1, \lambda_2, \cdots, \lambda_{p_n} \geq 0.
\end{aligned}
\end{equation*}

We iterate between $\bm{\alpha}$ and $\bm{\lambda}$ to solve the above optimization problem. At the $k$-th iteration,
\begin{align}
&\bm{\alpha}^{k+1}  =  \min\limits_{\bm{\alpha}} n^{-1}\sum_{j=1}^n l(Y_j,\sum_{i=1}^n\alpha_i \kappa_{\bm{\lambda}^{k} , \sigma_n}(\Xb_j, \Xb_i) ) + \gamma_{1n} \bm{\alpha}^\intercal \Kb_{\bm{\lambda}^k, \sigma_n} \bm{\alpha} \\
& \bm{\lambda}^{k+1} = \min\limits_{0\le \bm{\lambda}\le M} n^{-1}\sum_{j=1}^n l(Y_j,\sum_{i=1}^n\alpha_i^{k+1} \kappa_{\bm{\lambda}, \sigma_n}(\Xb_j, \Xb_i)  ) \notag \\
&\qquad \qquad\qquad + \gamma_{1n} ({\bm{\alpha}^{k+1}})^\intercal \Kb_{\bm{\lambda}, \sigma_n} \bm{\alpha}^{k+1}  + \gamma_{2n} \sum_{m=1}^{p_n}  \lambda_m I(\lambda_m < M/2).
\end{align}
Since the loss function is a convex loss, the optimization in (3) is a convex minimization problem, so many optimization algorithms can be applied. To solve (4) for $\bm{\lambda}$, we adopt a coordinate descent algorithm to update each $\lambda_q \ (q = 1, 2, \cdots, p_n)$ in turn. Specifically,  to obtain $\lambda_q^{k+1}$, we fix $\lambda_1^{k+1}, \lambda_2^{k+1}, \cdots, \lambda_{q+1}^k, \lambda_{q+2}^{k}, \cdots, \lambda_{p_n}^k$ and then after simple calculation, the objective function takes the following form,
\begin{equation}
 \min\limits_{\lambda_q\ge 0} \frac{1}{n} \sum_{i=1}^n g(a_{iq} +b_{iq}\lambda_q) + d_q \lambda_q,
\end{equation}
where 
$g(\lambda_q)$ is equal to $l(Y_j,\sum_{i=1}^n\alpha_i^{k+1} \kappa_{\bm{\lambda}, \sigma_n}(\Xb_j, \Xb_i)  )$ as  a function of $\lambda_q$, and 
$a_{iq}, b_{iq}, d_q$'s are constants. By the construction of $\kappa_{\bm{\lambda}, \sigma_n}$, $g(\lambda_q)$ is a convex function so each step in the coordinating descent algorithm is a constrained convex minimization problem in a bounded inteval, which is easy to solve.
 Thus, our algorithm guarantees that the objective function decreases over iterations and converges to a local minimum. We summarize the algorithm in the following table. At the convergence after $k$ iterations, the final prediction function is given as  
$$\widehat f_{\bm{\lambda}^{k+1}}(\Xb)=\sum_{i=1}^n {\alpha}_i^{k+1}\kappa_{\bm{\lambda}^{k+1}, \sigma_n}(\Xb, \Xb_i).$$
For classification problem,  the classification rule is $\textrm{sign}(\widehat f_{\bm{\lambda}^{k+1}}(\Xb))=\textrm{sign}(\sum_{i=1}^n {\alpha}_i^{k+1}\kappa_{\bm{\lambda}^{k+1}, \sigma_n}(\Xb, \Xb_i) ).$ We give details of our algorithm below (Algorithm 1).

\begin{algorithm}[H]  
  \caption{ Algorithm for learning $f(\Xb)$}  
  \begin{algorithmic}
    \Require  
      Data $(\Xb, \Yb)$; Regularization parameter $\gamma_{1n}$ and $\gamma_{2n}$;  Former updating results, $\widehat{\bm{\alpha}}^{k}, \widehat{\bm{\lambda}}^{k}, \widehat{f}_{\widehat{\bm{\lambda}}^{k}}$;  

      \hspace{-1.25cm}\textbf{Initialize} For regression, $ \widehat{\bm{\lambda}}_0 = \0$; For classification, $ \widehat{\bm{\lambda}}_0 = (0, \cdots, 1,\cdots, 0) $, where all elements equal to 0, expect the one having largest margin correlation with outcome.
      
      \hspace{-1.25cm}\textbf{Iterate} until convergence ($\delta = \vert L_n(\widehat{\bm{\lambda}}^{k+1}, \widehat{f}_{\widehat{\bm{\lambda}}^{k+1}}) - L_n(\widehat{\bm{\lambda}}^{k}, \widehat{f}_{\widehat{\bm{\lambda}}^{k}} ) \vert \leq c_1$, $e  ={\Vert  \widehat{\bm{\lambda}}^{k+1} - \widehat{\bm{\lambda}}^{k} \Vert}_1\leq  c_2 $, where $c_1$ and $c_2$ are given cut points):
      \begin{itemize}
      \item[(i)]  Update $\widehat{\bm{\alpha}}^{k+1}$ for fix $\widehat{\bm{\lambda}}^k$, which can be solved explicitly for regression and via fminsearch function for classification.
      \item[(ii)]  Update $\widehat{\bm{\lambda}}^{k+1}$  for fixed $\widehat{\bm{\alpha}}^{k+1}$ via coordinate descent algorithm.
       \item [(iii)] $\delta=\vert L_n(\widehat{\bm{\lambda}}^{k+1}, \widehat{f}_{\widehat{\bm{\lambda}}^{k+1}}) - L_n(\widehat{\bm{\lambda}}^{k}, \widehat{f}_{\widehat{\bm{\lambda}}^{k}})\vert$ and $e ={\Vert  \widehat{\bm{\lambda}}^{k+1} - \widehat{\bm{\lambda}}^{k} \Vert}_1$ .
      \end{itemize}   
    \Ensure  
      $\widehat{\bm{\alpha}}^{k+1}, \widehat{\bm{\lambda}}^{k+1}, \widehat{f}_{\widehat{\bm{\lambda}}^{k+1}}$.
  \end{algorithmic}  
\end{algorithm}

\noindent\textbf{Remark 1}. When updating $\bm{\alpha}$ interatively, for regression, it can be solved in a closed form as
$\widehat{ \bm{\alpha}}^{k+1} = (\Kb_{\widehat{\bm{\lambda}}^k, \sigma_n}^\intercal \Kb_{\widehat{\bm{\lambda}}^k, \sigma_n} +n\gamma_{1 n}\Kb_{\widehat{\bm{\lambda}}^k, \sigma_n})^{-1}\Kb_{\widehat{\bm{\lambda}}^k, \sigma_n}^\intercal Y$. For classification, we apply one-step Newton method for updating. Tuning parameters in the algorithm are chosen via cross-validation over a grid of $2^{-15}, 2^{-13}, \cdots, 2^{-13}, 2^{15}$. Although the kernel bandwidth, $\sigma_n$, can also be tuned, to save computation cost,  we follow \citet{jaakkola1999fisher} to set it to be the median value of the paired distances.

\section{Theoretical Properties}

In this section, we present some theoretical properties of our proposed method.  Since our proposed kernel function is new, we first provide two theorems that describe the properties for the RKHS generated by this kernel function. In the first theorem, we show that this space is dense in $L_2(P)$ subspace consisting of all measurable functions that only depend on the feature variables for which $\lambda_m\neq 0$ in the kernel function. In the second theorem, we obtain the entropy number for the unit ball in this space. Both theorems are necessary to establish the asymptotic properties of the proposed estimator for $f(\Xb)$ as given in the previous section.

To state our results, we define $f_0(\bX)$ as the Bayesian prediction function, which is assumed to be unique. That is, $E[l(Y, f)]$ attains its minimum when $f=f_0$. We assume that feature variables $X_1, X_2, \cdots, X_q$ are  important in terms that $f_0(\bX)$ is only a function of $X_1,X_2,...,X_q$ and for any $1\le s\le q$, 
$$E\left\{\Big(f_0(\bX)-E\left[f_0(\bX)\Big|X_1, X_2, X_{s-1}, X_{s+1} \cdots, X_q \right]\Big)^2\right\}> 0.$$
Finally, we let $d_2(f_0,  \mathcal{H}_{\bm{\lambda},\sigma_n})$ denote the $L_2(P)$-distance between $f_0$ and the RKHS generated by $\kappa_{\bm{\lambda},\sigma_n}$. 

\begin{theorem}For a vector $ \bm{\lambda}_n=(\lambda_{n1},...,\lambda_{np_n})$ with $\lambda_{nm} \geq 0$ for $m=1,..., p_n$, the following results hold:
\begin{enumerate}
\item[(i)] If $ \lambda_{nm} >  0$  for $m=1,...,q$, i.e., $\lambda_n$'s that are associated with the important  features are strictly positive, then
$d_2( f_0, \mathcal{H}_{\bm{\lambda}_n, {\sigma}_n}) \to 0.$
\item[(ii)] If for some  $m\le q$, $ \lambda_{nm} =  0$, then
$\lim\inf d_2( f_0, \mathcal{H}_{\bm{\lambda}_n,{\sigma}_n}) > 0.$
\end{enumerate}
\end{theorem}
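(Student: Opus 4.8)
The two parts are complementary statements about the $L_2(P)$-closure of $\mathcal{H}_{\bm{\lambda}_n,\sigma_n}$: (i) asks that this closure be rich enough to contain $f_0$, and (ii) that, when a regularization parameter attached to an important feature vanishes, the closure stays at a fixed positive distance from $f_0$. Both are driven by the product form $\kappa_{\bm{\lambda}_n,\sigma_n}(\Xb,\widetilde\Xb)=\prod_{m=1}^{p_n}\{1+\lambda_{nm}\kappa_n(X_m,\widetilde X_m)\}$ together with two standard facts: the sum theorem for reproducing kernels (if $K=K_1+K_2$ with $K_1,K_2$ positive semidefinite, then $\mathcal{H}(K_1)\subseteq\mathcal{H}(K)$, with a norm-nonincreasing inclusion), and universality of the Gaussian kernel (its RKHS is dense in $L_2(\mu)$ for every finite Borel measure $\mu$).

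For part (i), the plan is to extract a Gaussian kernel in the $q$ important coordinates out of $\kappa_{\bm{\lambda}_n,\sigma_n}$. Factor the kernel as $\big[\prod_{m=1}^q\{1+\lambda_{nm}\kappa_n(X_m,\widetilde X_m)\}\big]\cdot\big[\prod_{m=q+1}^{p_n}\{1+\lambda_{nm}\kappa_n(X_m,\widetilde X_m)\}\big]$; the second factor equals $1+S_n$ with $S_n$ a positive semidefinite kernel, and expanding the first factor as $\sum_{S\subseteq\{1,\dots,q\}}\prod_{m\in S}\lambda_{nm}\kappa_n(X_m,\widetilde X_m)$ isolates the term $S=\{1,\dots,q\}$, namely $c_n\exp\{-\|X_{1:q}-\widetilde X_{1:q}\|^2/2\sigma_n^2\}$ with $c_n=\prod_{m=1}^q\lambda_{nm}>0$. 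Multiplying out, $\kappa_{\bm{\lambda}_n,\sigma_n}$ is a sum of positive semidefinite kernels one of which is this $q$-dimensional Gaussian kernel (the constant $c_n>0$ being irrelevant to the resulting space as a set); by the sum theorem $\mathcal{H}_{\bm{\lambda}_n,\sigma_n}$ contains the Gaussian RKHS on $\mathbb{R}^q$ with bandwidth $\sigma_n$, viewed as functions of $X_1,\dots,X_q$ only. Since that RKHS is dense in $L_2$ of the law of $(X_1,\dots,X_q)$ and $f_0\in L_2(P)$ depends only on $X_1,\dots,X_q$, we can approximate $f_0$ arbitrarily well in $L_2(P)$ by elements of $\mathcal{H}_{\bm{\lambda}_n,\sigma_n}$; hence $d_2(f_0,\mathcal{H}_{\bm{\lambda}_n,\sigma_n})=0$ for every $n$, a fortiori it tends to $0$.

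For part (ii), suppose $\lambda_{nm}=0$ for some $m\le q$. Then the $m$-th factor of the kernel is identically $1$, so $\kappa_{\bm{\lambda}_n,\sigma_n}$, and therefore every element of $\mathcal{H}_{\bm{\lambda}_n,\sigma_n}$, is a function of the remaining features only. Writing $\Xb_{(-m)}$ for all features other than $X_m$, the closed subspace $\mathcal{M}_m=\{g\in L_2(P):g\text{ does not depend on }X_m\}$ of $L_2(P)$ has $E[\,\cdot\mid\Xb_{(-m)}]$ as its orthogonal projection, so for any $h\in\mathcal{H}_{\bm{\lambda}_n,\sigma_n}$ Pythagoras gives $\|f_0-h\|_{L_2(P)}^2\ge\|f_0-E[f_0\mid\Xb_{(-m)}]\|_{L_2(P)}^2$, whence
\[
d_2(f_0,\mathcal{H}_{\bm{\lambda}_n,\sigma_n})\;\ge\;\big\|f_0-E[f_0\mid\Xb_{(-m)}]\big\|_{L_2(P)}.
\]
Because only the finitely many indices $m\in\{1,\dots,q\}$ can be the vanishing one, passing to a subsequence on which the vanishing index is a fixed $m$ reduces the statement to showing that this last norm is bounded below by a positive constant not depending on $n$, which I would derive from the ``importance'' hypothesis.

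The main obstacle is precisely that last lower bound. The hypothesis controls $\|f_0-E[f_0\mid X_1,\dots,X_{m-1},X_{m+1},\dots,X_q]\|$ from below --- conditioning only on the \emph{other important} features --- whereas the quantity above conditions on \emph{all} remaining features, a set whose size grows with $n$, and conditioning on more variables can only decrease the residual. To bridge this one needs that the irrelevant features carry no additional information about $X_m$ beyond $X_1,\dots,X_{m-1},X_{m+1},\dots,X_q$ (e.g.\ a conditional-independence or design-regularity condition on $\bX$ ensuring $E[f_0\mid\Xb_{(-m)}]=E[f_0\mid X_1,\dots,X_{m-1},X_{m+1},\dots,X_q]$), which I expect to be supplied by the paper's global assumptions on the feature distribution; making that reduction precise is the crux of (ii). In part (i), by contrast, the only point requiring a little care is invoking universality of the Gaussian kernel in the appropriate form --- denseness in $C(\mathcal{X})$ when the features lie in a compact set, or $L_2$-universality on $\mathbb{R}^q$ otherwise --- which is routine.
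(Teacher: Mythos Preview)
Your approach for both parts mirrors the paper's proof exactly: for (i), expand the product kernel, isolate the summand $\big(\prod_{m=1}^q\lambda_{nm}\big)\exp\{-\|X_{1:q}-\widetilde X_{1:q}\|^2/\sigma_n^2\}$, use the sum rule for RKHSs to conclude that $\mathcal{H}_{\bm{\lambda}_n,\sigma_n}$ contains the $q$-dimensional Gaussian RKHS, and appeal to universality; for (ii), note that $\lambda_{nm}=0$ forces every function in the space to be $\bX_{-m}$-measurable, then lower-bound the distance by $\|f_0-E[f_0\mid\bX_{-m}]\|_{L_2(P)}$ via the projection property. This is precisely what the paper does.

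The gap you flag in (ii) is genuine, and the paper does not close it either. Its proof ends with the bare assertion $d(f_0,E[f_0\mid\bX_{-m}])>0$ ``since $X_m$ is one important variable for $f_0$'', without reconciling the conditioning set $\bX_{-m}$ (all $p_n-1$ remaining features, a set that grows with $n$) with the conditioning set $(X_1,\dots,X_{m-1},X_{m+1},\dots,X_q)$ that appears in the stated importance hypothesis. As you correctly argue, enlarging the conditioning $\sigma$-field can only shrink the residual, so the stated hypothesis does not by itself imply the needed lower bound, and no independence or design assumption elsewhere in the paper is invoked here to bridge the two. A clean fix is either (a) to strengthen the definition of ``important'' to $\inf_n E\{(f_0-E[f_0\mid\bX_{-m}])^2\}>0$, or (b) to assume the unimportant block $(X_{q+1},\dots,X_{p_n})$ is conditionally independent of $X_m$ given the other important features, under which $E[f_0\mid\bX_{-m}]=E[f_0\mid X_1,\dots,X_{m-1},X_{m+1},\dots,X_q]$ because $f_0$ depends only on $X_1,\dots,X_q$. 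So you are not missing an argument that the paper supplies; you have located an implicit assumption that the paper leaves unstated.
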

Note: The Theorem holds for $\bm{\lambda}$ whose value depends on $n$ and denoted as $\bm{\lambda}_n$.
\begin{proof}
To prove (i), we first note that after expansion,
$\kappa_{\bm{\lambda}_n,  \sigma_n}(\Xb, \widetilde\Xb)$ is the summation of a number of Gaussian kernels. In particular, one term of this summation is
$$\left\{\lambda_{n1} \lambda_{n2} \cdots \lambda_{nq} {\kappa_{\sigma_{n}}(X_{1}, \widetilde X_{1})  \kappa_{\sigma_{n}}(X_{2}, \widetilde{X}_{2}) \cdots  \kappa_{\sigma_{n}}(X_{q}, \widetilde X_{q})}\right\},$$
where $\kappa_{\sigma}(x,y)=\exp\{-(x-y)^2/\sigma^2\}.$
Since  $\lambda_{n1},...,\lambda_{nq}>0$, the kernel function associated with this term is proportional to the Gaussian kernel in the space of $(X_{1}, \cdots, X_{q})$ with bandwidth $\sigma_{n}$ for each domain $k$. Therefore, the closure of the RKHS generated by $\kappa_{\bm{\lambda}_n, \sigma_n}$ includes the RKHS generated by
the Gaussian kernel in the space of $(X_{1}, \cdots, X_{q})$.
The result in (i) holds since the latter is asymptotically dense in the subspace of $L_2(P)$ consisting of any functions depending on $(x_1,...,x_q)$.

To prove (ii), if $\lambda_{m}=0$, then it is clear that any function in $\mathcal{H}_{\bm{\lambda}_n, {\sigma}_n}$ 
only depends on the feature variables except $X_{m}$.  Therefore, 
$$\mathcal{H}_{\bm{\lambda}_n,{\sigma}_n}\subset \left\{g(\bX_{-m}): g\in L_2(P)\right\},$$
where $\bX_{-m}$ denotes all the feature variables excluding $X_{m}$. 
On the other hand, the projection of $f_0$ on the latter space is $E[f_0|\bX_{-m}]$.
Therefore,
$$\lim\inf d(f_0, \mathcal{H}_{\bm{\lambda}_n,{\sigma}_n})\ge d(f_0,E[f_0|\bX_{-m}])>0$$
since $X_m$ is one important variable for $f_0$. We obtain the result.
\end{proof}

Our next theorem studies the bracket covering number for a unit ball in ${\cal H}_{\bm{\lambda}_n,\sigma_n}$. We consider ${\cal B}_n$ as the unit ball in ${\cal H}_{\bm{\lambda}_n,\sigma_n}$, i.e., 
${\cal B}_n\equiv \left\{f(\bx): \Vert f\Vert_{{\cal H}_{\bm{\lambda}_n,\sigma_n}}\le 1\right\},$
 Then the $\epsilon$-bracket covering number for ${\cal B}_n$, denoted as $N_{[]}(\epsilon, \mathcal{B}_n, \Vert\cdot\Vert_{L_2(P)})$,
is defined as the minimal number of pairs $[l(\bx), u(\bx)]$ such that any function $\Vert u(\Xb)-l(\Xb)\Vert_{L_2(P)}\le \epsilon$ and any function $f$ in ${\cal B}_n$ is between one pair, i.e., $l(\bx)\le f(\bx)\le u(\bx)$.
 
 \begin{theorem} For a vector $ \bm{\lambda}_n=(\lambda_{n1},...,\lambda_{np_n})$ such that $ \lambda_{nm}$ is uniformly bounded by a constant $M$ for $m=1,...,q$ and $\lambda_{n(q +1)}=...=\lambda_{np_n}=0$, it holds
$$ \log \mathcal{N}_{[]} (\epsilon, \mathcal{B}_n, \Vert \cdot \Vert_{L_2(P)} ) \leq C\sigma_n^{-(1-v/4)q} \epsilon^{-v}, $$
where $v$ is any constant within $(0,2)$ and $C$ only depends on $M$ and $q$.
\end{theorem}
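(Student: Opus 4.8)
The plan is to reduce ${\cal B}_n$ to a bounded number of balls in ordinary Gaussian RKHS's of at most $q$ variables, bound the bracketing entropy of each of those by a Gaussian-kernel entropy estimate, and recombine. First, since $\lambda_{n(q+1)}=\cdots=\lambda_{np_n}=0$, we have $\kappa_{\bm{\lambda}_n,\sigma_n}(\Xb,\widetilde\Xb)=\prod_{m=1}^{q}\{1+\lambda_{nm}\kappa_n(X_m,\widetilde X_m)\}$, and expanding the product over subsets $S\subseteq\{1,\dots,q\}$ gives $\kappa_{\bm{\lambda}_n,\sigma_n}=\sum_{S}\kappa_S$ with $\kappa_S(\Xb,\widetilde\Xb)=c_S\prod_{m\in S}\kappa_n(X_m,\widetilde X_m)$ and $c_S:=\prod_{m\in S}\lambda_{nm}\le M^{q}$, i.e.\ $\kappa_S$ is a scalar multiple of the bandwidth-$\sigma_n$ Gaussian kernel on the coordinates in $S$. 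By Aronszajn's sum theorem for reproducing kernels, ${\cal H}_{\bm{\lambda}_n,\sigma_n}=\sum_{S}{\cal H}_{\kappa_S}$ with $\|f\|_{{\cal H}_{\bm{\lambda}_n,\sigma_n}}^2=\inf\big\{\sum_S\|f_S\|_{{\cal H}_{\kappa_S}}^2:f=\sum_S f_S\big\}$. Hence any $f\in{\cal B}_n$ decomposes as $f=\sum_S f_S$ over the $2^q$ subsets with $\|f_S\|_{{\cal H}_{\kappa_S}}\le 1$; regarding $f_S$ as a function of $(X_m)_{m\in S}$ places it in $c_S^{1/2}{\cal B}_{\sigma_n}^{(|S|)}\subseteq M^{q/2}{\cal B}_{\sigma_n}^{(|S|)}$, where ${\cal B}_{\sigma_n}^{(k)}$ is the unit ball of the bandwidth-$\sigma_n$ Gaussian RKHS on $k$ variables, restricted to the support of $\bX$ (assumed compact, as is standard).

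The technical core is the bound, for $k\le q$ and $v\in(0,2)$,
$$\log N_{[]}\big(\delta,\,M^{q/2}{\cal B}_{\sigma_n}^{(k)},\,\|\cdot\|_{L_2(P)}\big)\le C_{k,M,v}\,\sigma_n^{-(1-v/4)k}\,\delta^{-v},$$
which I would establish Fourier-analytically (it can also be cited from the literature on covering numbers of Gaussian kernels). In the Fourier domain the RKHS norm squared equals a constant times $\sigma_n^{-k}\int|\widehat f(\omega)|^2 e^{\sigma_n^2\|\omega\|^2/2}\,d\omega$, so a ball function has $\|f\|_{L_2}\lesssim\sigma_n^{k/2}$ and, truncating $\widehat f$ at frequency $\Omega\asymp\sigma_n^{-1}\sqrt{\log(1/\delta)}$, is $L_2$-approximated within $\delta$ by a function supported on $\asymp\Omega^k$ frequencies; covering that finite-dimensional ball of radius $\lesssim\sigma_n^{k/2}$ (and passing to brackets using the uniform bound $|f|\le(1+M)^{q/2}$ together with the Lipschitz modulus $\lesssim\sigma_n^{-1}$) costs $\lesssim\Omega^k\log(1/\delta)\lesssim\sigma_n^{-k}(\log 1/\delta)^{k/2+1}$. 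This crude bound is vacuous unless $\delta\lesssim\sigma_n^{k/2}$, i.e.\ $\sigma_n^{-k}\lesssim\delta^{-2}$; writing $\sigma_n^{-k}=\sigma_n^{-(1-v/4)k}\big(\sigma_n^{-k}\big)^{v/4}\lesssim\sigma_n^{-(1-v/4)k}\delta^{-v/2}$ and absorbing the logarithmic factor into the remaining $\delta^{-v/2}$ yields the display. This trade-off — spending powers of $\delta^{-1}$ to improve the power of $\sigma_n^{-1}$, exploiting that the ball has $L_2(P)$-radius $O(\sigma_n^{k/2})$ — is exactly where the exponent $1-v/4$ originates, and the careful bookkeeping (especially the sup-norm/bracketing version of the finite-dimensional covering step) is the main obstacle.

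Granting the core estimate, I would finish by assembly. Choose $\delta=2^{-q}\epsilon$, take a bracketing $\delta$-net for each of the $2^q$ subspaces, and form all sums $[\sum_S l_S,\sum_S u_S]$; since $f=\sum_S f_S$ has at most $2^q$ terms, this is a bracketing $\epsilon$-net for ${\cal B}_n$ of cardinality at most the product of the individual net sizes. Taking logarithms, using $\sigma_n^{-(1-v/4)k}\le\sigma_n^{-(1-v/4)q}$ for $k\le q$ (valid once $\sigma_n\le 1$), and that there are only $2^q$ factors, gives $\log N_{[]}(\epsilon,{\cal B}_n,\|\cdot\|_{L_2(P)})\le C\sigma_n^{-(1-v/4)q}\epsilon^{-v}$ with $C$ depending only on $M$, $q$ (and the fixed $v$); the $S=\varnothing$ term contributes only an $O(\log(1/\epsilon))$ amount and is absorbed.
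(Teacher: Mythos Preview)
Your proposal is correct and follows essentially the same route as the paper: expand $\kappa_{\bm{\lambda}_n,\sigma_n}$ over the $2^q$ subsets of $\{1,\dots,q\}$, decompose each $f\in\mathcal{B}_n$ into components lying in unit balls of the corresponding Gaussian RKHS's (the paper does this via the explicit kernel expansion $f=\sum_i\alpha_i\kappa_{\bm{\lambda}_n,\sigma_n}(\cdot,\bx_i)$ rather than Aronszajn's sum theorem, but the resulting decomposition and norm identity are the same), then sum the subset entropies and bound by the worst case $s=q$. The only substantive difference is that the paper simply cites \citet{steinwart2007gaussian} for the Gaussian-kernel bracketing bound $\log N_{[]}(\epsilon,\mathcal{B}_{\sigma_n}^{(s)},\|\cdot\|_{L_2(P)})\le C\sigma_n^{-(1-v/4)s}\epsilon^{-v}$, whereas you sketch a Fourier-analytic derivation; since you also note the bound can be cited, your proof would in fact be complete once that reference is invoked.
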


\begin{proof}
For any $f\in {\cal B}_n$ with form
$$f(\bx)=\sum_{i=1}^{\infty} \alpha_i \kappa_{\bm{\lambda}_n,\sigma_n}(\bx, \bx_i),$$
where $\bx_1,\bx_2,...$ are a sequence of given points.
Using the expansion of $\kappa_{\bm{\lambda}_n,\sigma_n}$, we have
\begin{eqnarray*}
f(\bx)&=&\sum_{\{k_1,...,k_s\}\subset\{1,...,q\}\cup\phi} \lambda_{nk_1}\cdots\lambda_{nk_s}
\sum_{i=1}^{\infty} \alpha_i \exp\left\{{-\frac{(x_{ik_1}-x_{k_1})^2 + \cdots +(x_{ik_s}-x_{k_s})^2 }{\sigma_n^2}}\right\}\\
&=&\sum_{\{k_1,...,k_s\}\subset\{1,...,q\}\cup\phi}\sqrt{\lambda_{nk_1}\cdots\lambda_{nk_s}}
f_{k_1...k_s}(\xb),
\end{eqnarray*}
where $x_{ik}$ and $x_k$ are respectively the $k$th component of $\bx_i$ and $\bx$, and 
$$ f_{k_1...k_s}(\xb) = \sum_{i=1}^{\infty}  \alpha_i\sqrt{{\lambda}_{nk_1}\cdots{\lambda}_{nk_s}}\exp\left\{-\frac{(x_{ik_1}-x_{k_1})^2 + \cdots +(x_{ik_s}-x_{k_s})^2 }{\sigma_n^2}\right\}.$$ Here, if the index set if empty, then the exponential part in the summation is replaced by 1.

Clearly, if we denote ${\cal H}_{k_1...k_s}$ as the reproducing kernel Hilbert space generated by the Gaussian kernel
$\exp\left\{-{[(\widetilde x_{k_1}-x_{k_1})^2 + \cdots +(\widetilde x_{k_s}-x_{k_s})^2] }/{\sigma_n^2}\right\},$
then $f_{k_1...k_s}(\xb)\in {\cal H}_{k_1...k_s}$ and moreover,
\begin{align*}
\Vert f \Vert _{\mathcal{H}_{{\bm{\lambda}_n}, \sigma_n}}^2 &= \sum_{i=1}^{\infty} \sum_{j=1}^{\infty} \alpha_i\alpha_j \kappa_{{\bm{\lambda}_n, \sigma_n}}(\xb_i, \xb_j)\\
&= \sum_{i=1}^n \sum_{j=1}^n \alpha_i\alpha_j \sum_{\{k_1,...,k_s\}\subset\{1,...,q\}\cup\phi}{\lambda}_{nk_1}\cdots{\lambda}_{nk_s}\exp\left\{-\frac{(x_{ik_1}-x_{jk_1})^2 + \cdots +(x_{ik_s}-x_{jk_s})^2 }{\sigma_n^2}\right\}\\
&=  \sum_{\{k_1,...,k_s\}\subset\{1,...,q\}\cup\phi}\sum_{i=1}^{\infty} \sum_{j=1}^{\infty} \alpha_i\alpha_j {\lambda}_{nk_1}\cdots{\lambda}_{nk_s}\exp\left\{-\frac{(x_{ik_1}-x_{jk_1})^2 + \cdots +(x_{ik_s}-x_{jk_s})^2 }{\sigma_n^2}\right\}\\
&=  \sum_{\{k_1,...,k_s\}\subset\{1,...,q\}\cup\phi} \Vert f_{k_1...k_s} \Vert _{\mathcal{H}_{k_1...k_s}}^2.
\end{align*} 
Thus, $\Vert f\Vert_{\mathcal{H}_{\bm{\lambda}_n,\sigma_n}}\le 1$ implies $\Vert f_{k_1...k_s} \Vert _{\mathcal{H}_{k_1...k_s}}\le 1$ for any $k_1,...,k_s$.

Consequently, since such $f$ is dense in $\mathcal{B}_n$, we conclude
$$\mathcal{B}_n \subseteq \overline{\left\{ \sum_{\{k_1,...,k_s\}\subset\{1,...,q\}\cup\phi} f_{k_1...k_s}(\bx) \sqrt{{\lambda}_{nk_1}\cdots{\lambda}_{nk_s}}: \Vert f_{k_1...k_s} \Vert _{\mathcal{H}_{k_1...k_s}}^2 \leq 1\right \}}.$$
Thus, there exists a constant $C$ only depending on $M$ and $q$ such that
$$\log\mathcal{N}_{[]}(2^qM^{q/2}\epsilon, \mathcal{B}_n,\Vert \cdot\Vert _{L_2(P)})  \leq  {\sum_{\{k_1,...,k_s\}\subset\{1,...,q\}\cup\phi}} \log\mathcal{N}_{[]}(\epsilon, \{f_{k_1...k_s}(\bx), \Vert f_{k_1...k_s} \Vert _{\mathcal{H}_{k_1...k_s}} \leq  1 \}, \Vert \cdot\Vert _{L_2(P)})  $$
According to (\citet{steinwart2007gaussian}), we know
$$\log\mathcal{N}_{[]}(\epsilon, \{f_{k_1...k_s}(\bx), \Vert f_{k_1...k_s} \Vert _{\mathcal{H}_{k_1...k_s}}^2 \leq  1 \}, \Vert \cdot\Vert _{L_2(P)}) \leq C\sigma_n^{-(1-v/4)s} \epsilon^{-v}, $$
for any constant $v\in (0,2)$ and a constant $C$ only depending on $s$. Therefore,
$$ \log \mathcal{N} (\epsilon, \mathcal{B}_n, \Vert \cdot \Vert_{L_2(P)} ) \leq C(M,q)\sum_{\{k_1,...,k_s\}\subset\{1,...,q\}\cup\phi}\sigma_n^{-(1-v/4)s} \epsilon^{-v}\le C(M, q) \sigma_n^{-(1-v/4)q}\epsilon^{-v}$$
for a constant $C(M, q)$.
We have proved Theorem 2.
\end{proof}

Our next theorem gives the main properties of the estimated prediction function. We show that the resulting prediction function from our method leads to Bayesian risk asymptotically. Moreover, with probability tending to one, the variable selection based on non-zero $\lambda_n$'s  is oracle  as if we knew which variables were important. 
 Recall that $(\widehat{\bm{\lambda}_n}, \widehat f)$ is the optimal solution of the objective function 
\begin{equation}
L_n(\bm{\lambda}_n, f) = {\bf P}_n \mathnormal{l}(Y, f(\Xb)) +\gamma_{1n}\Vert f\Vert _{\mathcal{H}_{\bm{\lambda}_n,  \bm{\sigma}_n}}^2 +\gamma_{2n} P(\bm{\lambda}_n),
\end{equation}
where $P(\bm{\lambda}_n)$ is the truncated Lasso penalty for $\bm{\lambda}_n$. 
Equivalently, if we define for any $\bm{\lambda}_n$, 
$$\widehat{f}_{\bm{\lambda}_n} = \textrm{arg}\min_{f}L_n(\bm{\lambda}_n,f),$$
which exists due to the convexity of $L_n(\bm{\lambda}_n, f)$ in $f$,
then 
$\widehat{\bm{\lambda}}$ minimizes $L_n(\widehat{\bm{\lambda}}_n,\widehat f_{\bm{\lambda}_n})$
and $\widehat f=\widehat f_{\widehat{\bm{\lambda}}_n}$.

For the main theorem, we assume $(Y,\Xb)$ to have a bounded support and need the following conditions. 
 \\
 (C1).  The loss function $l(y,f)$ is convex  and is Lipschtisz continuous with respect to $f$ in any bounded set.\\
(C2). There exit $\delta>0$ and a constant $c_1>0$ such that
$$E[l(Y,f(\bX))-l(Y,f_0(\bX))]\ge c_1 \Vert f(\bX)-f_0(\bX)\Vert_{L_2(P)}^{2}$$ whenever $E[l(Y,f(\bX))-l(Y,f_0(\bX))]$ is smaller than $\delta$.
\\
(C3).
Assume $\Vert l_2(Y,f(\Xb))-l_2(Y,f_0(\Xb))\Vert_{L_2(P)} \le c_2 \Vert f(\bX)-f_0(\bX)\Vert_{L_2(P)}$ for a constant $c_2$, where $l_2(y,x)=\partial l(y,x)/\partial x$.
\\
(C4). For any $\widetilde{\bm{\lambda}_n}=(\lambda_{n1},...,\lambda_{np_n})$ such that $\lambda_{nk}=0$ for $k>q$, let $\Lambda_{\max}(\Xb_{-q})$ and $\Lambda_{\min}(\Xb_{-q})$ be the largest and smallest eigenvalues of the matrix $\left(E[K_{\widetilde{\bm{\lambda}}_n}(\Xb_j, \Xb) K_{\widetilde{\bm{\lambda}}_n}(\Xb_l, \Xb)|\Xb_{-q}]\right)$ where $\Xb_{-q}$ denotes all unimportant variables. We assume that with probability one,
there exists one constant $c$ such that $\Lambda_{\max}(\Xb_{-q})/\Lambda_{\min}(\Xb_{-q})\le c\sigma_n^{-1/2}$ and
$E[\Lambda_{\min}(\Xb_{-q})\kappa_n(x,X_m)^2]\le c\sigma_n^{1/2}$ for any $m>q$.
\\
(C5).  Assume $\log p_n=o(n^{1-(2+q)\alpha_1 - \alpha_2 - \alpha_3})$. Moreover, we assume $\sigma_n = n^{-\alpha_1}$, $\gamma_{1n} = n^{-\alpha_2}, \gamma_{2n} = n^{-\alpha_3}$, where $\alpha_k>0$ for $k=1,2,3$ and they satisfy

(i) $ 1- (2+q) \alpha_1 - \alpha_2 >0$ 

(ii) $0 < \alpha_3 <\min{ \Big( \frac{1}{4} (1+ \frac{\alpha_1 q}{2} + \alpha_2), 1- (2+q)\alpha_1 - \alpha_2, \frac{\alpha_1}{2}, \frac{\alpha_2}{2} \Big)} $.

Conditions (C1)-(C3) give the assumptions for the loss functions. It can be verified that they hold for $l(y,f)=(y-f)^2$ for a continuous $y$ and for $l(y,f)=\exp(-yf)$ for a binary $y$.  Condition (C4) 
implies the equivalence between the Euclidean norm of the coefficients and the reproducing kernel Hilbert space norm, up to a scale proportional to $\sigma_n^{-1/2}$. The second half of the condition in (C4) holds automatically if the important variables are independent of the unimportant variable when $\Lambda_{\min}(\Xb_{-q})$ does not depend on $\Xb_{-q}$.  We note that such a condition is analogue to the design matrix condition assumed in  high dimensional linear model literature.
Finally, condition (C5) allows the dimensionality of the feature variable to be ultra-high and imposes additional constraints for the choices of the bandwidth and two tuning parameters.
 
\begin{theorem} Under Conditions (C1)-(C5), there exists a local minimizer $\widehat{\bm{\lambda}}_n$ for $L_n(\bm{\lambda}_n,\widehat f_{\bm{\lambda}_n})$  such that with probability tending to one,\\
(a) $E[l \big( Y, \widehat{f}_{\widehat{\bm{\lambda}}_n}  \big)]$ converges to $E[l\big( Y, f_0\big)]$.\\
(b)  For $m = 1,...,q,  \widehat{\lambda}_{nm} > 0$.\\
(c) For $ m = q+1, q+2, \cdots, p_n, \  \widehat{\lambda}_{nm} = 0$.
\end{theorem}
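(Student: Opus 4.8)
The plan is to produce the asserted minimizer by first optimizing $\bm\lambda_n$ over the ``oracle face'' of the parameter space and then showing, by a first-order argument, that the resulting point is a local minimizer of the full objective and satisfies (a)--(c). Concretely, I would restrict to the compact set $\mathcal{A}_n=\{\bm\lambda_n:0\le\lambda_{nm}\le M \text{ for } m\le q,\ \lambda_{nm}=0 \text{ for } m>q\}$ and take $\widehat{\bm\lambda}_n$ to be a minimizer of $\bm\lambda_n\mapsto L_n(\bm\lambda_n,\widehat f_{\bm\lambda_n})$ over $\mathcal{A}_n$ (which exists by continuity and compactness), with $\widehat f=\widehat f_{\widehat{\bm\lambda}_n}$. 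For part (a), fix $\bm\lambda_n^{\ast}\in\mathcal{A}_n$ with all first $q$ entries equal to $M$; Theorem~1(i) guarantees an approximant $f_n^{\ast}\in\mathcal{H}_{\bm\lambda_n^{\ast},\sigma_n}$ with $\|f_n^{\ast}-f_0\|_{L_2(P)}\to0$, and the standard Gaussian-mollification construction gives in addition $\|f_n^{\ast}\|_{\mathcal{H}_{\bm\lambda_n^{\ast},\sigma_n}}^2=O(\sigma_n^{-q})$. Optimality of $\widehat{\bm\lambda}_n$ over $\mathcal{A}_n\ni\bm\lambda_n^{\ast}$ gives the basic inequality $L_n(\widehat{\bm\lambda}_n,\widehat f)\le L_n(\bm\lambda_n^{\ast},f_n^{\ast})$, which bounds the empirical excess risk of $\widehat f$ and forces $\|\widehat f\|_{\mathcal{H}_{\widehat{\bm\lambda}_n,\sigma_n}}\le R_n$ with $R_n$ a fixed power of $n$. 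Theorem~2 (whose bound is in fact uniform over $\bm\lambda_n\in\mathcal{A}_n$, since its proof controls the entropy of the $\bm\lambda_n$-free envelope class) supplies a finite bracketing integral for the radius-$R_n$ ball because $v<2$; feeding this into a maximal inequality for $(\mathbf{P}_n-P)(l(Y,f)-l(Y,f_0))$ via the Lipschitz bound (C1), together with the quadratic margin (C2) and a standard peeling/fixed-point step, yields $\|\widehat f-f_0\|_{L_2(P)}=o_P(1)$ and $E[l(Y,\widehat f)]-E[l(Y,f_0)]=o_P(1)$; the calibrations in (C5) are exactly what make the entropy prefactor $\sigma_n^{-(1-v/4)q}$, the radius $R_n$, the approximation error and $\gamma_{1n},\gamma_{2n}$ fit together. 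This is (a).

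For part (b), suppose $\widehat\lambda_{nm}=0$ for some $m\le q$. Since the coordinates beyond $q$ already vanish on $\mathcal{A}_n$, every element of $\mathcal{H}_{\widehat{\bm\lambda}_n,\sigma_n}$, hence $\widehat f$, depends only on $\bX_{-m}$, so by Theorem~1(ii), $\|\widehat f-f_0\|_{L_2(P)}\ge d_2(f_0,\mathcal{H}_{\widehat{\bm\lambda}_n,\sigma_n})\ge\|f_0-E[f_0\mid\bX_{-m}]\|_{L_2(P)}=:\rho_m>0$, a constant free of $n$; this contradicts (a). A union bound over the fixed number $q$ of important indices gives (b), and in particular the first $q$ coordinates of $\widehat{\bm\lambda}_n$ are interior.

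For part (c) and the local-minimizer claim, I would verify the first-order optimality condition for turning on a null coordinate — this is precisely the coordinate subproblem (5) evaluated at $\lambda_q=0$. Writing $\Phi_n(\bm\lambda_n)=\min_f L_n(\bm\lambda_n,f)$ and using $\partial_{\lambda_{nm}}\kappa_{\bm\lambda_n,\sigma_n}|_{\widehat{\bm\lambda}_n}=\kappa_n(X_m,\widetilde X_m)\,\kappa_{\widehat{\bm\lambda}_n,\sigma_n}$ when $\widehat\lambda_{nm}=0$, the right derivative of $\Phi_n$ in $\lambda_{nm}$, $m>q$, at $\widehat{\bm\lambda}_n$ equals
$$\gamma_{2n}+\gamma_{1n}\,\widehat{\bm\alpha}^{\top}\!\big(\,\big(\kappa_n(X_{im},X_{lm})\big)_{il}\text{ Schur-multiplied with }\Kb_{\widehat{\bm\lambda}_n,\sigma_n}\big)\widehat{\bm\alpha}+\mathbf{P}_n\!\big[l_2(Y,\widehat f(\Xb))\,H_m(\Xb)\big],$$
where $H_m(\Xb)=\sum_i\widehat\alpha_i\kappa_n(X_m,X_{im})\kappa_{\widehat{\bm\lambda}_n,\sigma_n}(\Xb,\Xb_i)$ and the middle term is nonnegative (a Schur product of PSD kernels). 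It remains to show $\big|\mathbf{P}_n[l_2(Y,\widehat f)H_m]\big|=o_P(\gamma_{2n})$ uniformly in $m>q$. Splitting $l_2(Y,\widehat f)=l_2(Y,f_0)+\{l_2(Y,\widehat f)-l_2(Y,f_0)\}$: the first piece is mean-zero against any $h$ (first-order condition $E[l_2(Y,f_0(\bX))h(\bX)]=0$ for $f_0$), so $\mathbf{P}_n[l_2(Y,f_0)H_m]$ is controlled by a maximal inequality over the RKHS ball containing $H_m$, picking up $\sqrt{\log p_n/n}$ after a union bound over $m=q+1,\dots,p_n$; the second piece is $\le c_2\|\widehat f-f_0\|_{L_2(P)}\|H_m\|_{L_2(P)}$ by (C3) and Cauchy--Schwarz. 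Condition (C4) is exactly what bounds $\|H_m\|_{L_2(P)}$ (and the radius of the RKHS ball it lies in) by $\|\widehat f\|_{\mathcal{H}_{\widehat{\bm\lambda}_n,\sigma_n}}$ up to a power of $\sigma_n$: conditioning on the unimportant variables, the relevant Gram operator has eigenvalue ratio $\le c\sigma_n^{-1/2}$ while $E[\Lambda_{\min}\kappa_n(x,X_m)^2]\le c\sigma_n^{1/2}$. Combining these with the rate from (a) and the constraints in (C5) makes the displayed derivative $\ge\gamma_{2n}(1-o_P(1))>0$ for every $m>q$; since $\widehat{\bm\lambda}_n$ already minimizes $\Phi_n$ over $\mathcal{A}_n$, strict positivity of the one-sided derivatives leaving $\mathcal{A}_n$ promotes it (by a routine Taylor argument, using coordinate-wise convexity of $\Phi_n$) to a local minimizer of $\Phi_n$ over $\{0\le\bm\lambda_n\le M\}$ with $\widehat\lambda_{nm}=0$ for all $m>q$. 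This is (c).

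The main obstacle is (c): one must control the data-dependent perturbation score $\mathbf{P}_n[l_2(Y,\widehat f)H_m]$ simultaneously over the up to $p_n-q$ null coordinates. This forces one both to keep the $L_2$-size of $H_m$ of order $\|\widehat f\|_{\mathcal H}=O(\gamma_{1n}^{-1/2}\,\mathrm{poly}(\sigma_n^{-1}))$ — the role of the design-type conditions in (C4) — and to ensure that the maximal-inequality/union-bound cost of scanning all null coordinates, once multiplied by that size and by $\|\widehat f-f_0\|_{L_2(P)}$, remains $o(\gamma_{2n})$, which is exactly what the inequalities among $\alpha_1,\alpha_2,\alpha_3$ and the bound on $\log p_n$ in (C5) are engineered to deliver. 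By contrast the peeling argument in (a), with the slowly growing entropy prefactor $\sigma_n^{-(1-v/4)q}$, is standard, and (b) is immediate from Theorem~1(ii) once (a) is in hand.
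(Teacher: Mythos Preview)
Your proposal is correct and follows essentially the same route as the paper's own proof. You construct the oracle estimator by minimizing over the face $\mathcal A_n=\{\bm\lambda_n:\lambda_{nm}=0,\ m>q\}$, obtain (a) via a basic inequality plus the bracketing-entropy bound of Theorem~2 fed into a concentration/peeling argument, and establish (c) by checking that the one-sided KKT derivatives in the null coordinates are strictly positive --- exactly the strategy the paper describes (``we examine the KKT conditions to show that the oracle estimators \ldots\ satisfy the KKT conditions with probability tending to one''), and your decomposition of $\mathbf P_n[l_2(Y,\widehat f)H_m]$ using (C3), (C4) and a union bound over $m>q$ is the natural way to execute it under (C5).

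One small point worth flagging: your argument for (b) proceeds by contradiction through Theorem~1(ii) and the $L_2$-consistency obtained in (a) together with the margin condition (C2); the paper groups (b) with (c) under the KKT heading, but your route is a clean shortcut and uses Theorem~1(ii) exactly as intended. Also, your parenthetical remark that ``the first $q$ coordinates of $\widehat{\bm\lambda}_n$ are interior'' overstates slightly --- nothing rules out $\widehat\lambda_{nm}=M$ for some $m\le q$ --- but this is harmless for the theorem (which only asserts $\widehat\lambda_{nm}>0$) and for the local-minimizer claim, since stationarity within $\mathcal A_n$ is automatic from minimality over $\mathcal A_n$.
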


The first part of Theorem 3 implies that the loss of the estimated prediction function converges to the Bayes risk. The last two conclusions in  Theorem 3 show that the $\widehat{\lambda}_{nm}$'s associated with important feature variables should be non-zero, i.e., the estimated function does depend on important variables. More importantly, the proposed method can estimate the predicted function as if we knew which variables are important in the truth. 
The proof for Theorem 3 is given in the supplementary file. The proof of Theorem 3(a) entails careful examination of the stochastic variability of $L_n(\bm{\lambda}_n, \widehat f_{\bm{\lambda}_n})$, for which we first establish a preliminary bound for $\widehat f_{\bm{\lambda}_n}$ and then appeal to some concentration inequalities for empirical processes with metric entropy as derived from Theorem 2. To prove Theorem 3(b) and (c) in the theorem, we examine the KKT conditions to show that the oracle estimators, i.e., $\lambda_{nm}$ is known to be zero for $m>q$, satisfies the KKT conditions with probability tending to one. Again, concentration inequalities for empirical processes are needed in technical arguments in the proof.

\section{Simulation Study}

We conducted two simulation studies, one for a regression problem with continuous $Y$ and the other for classification with binary $Y$.  In the first simulation study, we considered a continuous outcome model with total number of $p$ correlated feature variables, which were generated from a multivariate normal distribution, each with mean zero and variance one. Furthermore, $X_{1}, X_{2},X_{3}, X_{4}$ were correlated with  $corr(X_{1}, X_{2}) = 0.4,\ corr(X_{1},X_{3}) = -0.3, \ corr(X_{2}, X_{3}) = 0.5$ and $corr(X_{3}, X_{4}) = 0.2$, while the others were all independent.
The outcome variable, $Y$, was simulated from a linear model 
$$Y= 0.9 X_{5}^3 +  4  X_{1}X_{2}X_{3}+ 2.3\exp(-X_{3}) + 4 X_{4} + \epsilon,$$
where $\epsilon \sim N(0,1)$. Thus, 
 $X_{1}$ to $X_5$ were important variables but not any others.
  In the second simulation study,  $X$'s were generated similarly but with some different correlations:  $corr(X_{1}, X_{2}) = -0.2,\ corr(X_{1},X_{4}) = 0.2, \ corr(X_{2}, X_{3}) = 0.5, \ corr(X_{3}, X_{4}) = 0.3$ and $corr(X_{3}, X_{4}) = -0.4$. The binary outcome, $Y$, with values $-1$ and $1$, were generated from a Bernoulli distribution with the probability of being one given by 
$$ \Big\{1+ e^{-0.25 +( X_{2} - 1.1X_{3} +0.3 X_{4})^3  } \Big\}^{-1},$$
so only $X_2$ to $X_4$ were important variables.
Since many biomedical applications (as well as our application in this work) have small to moderate sample sizes, in both simulation studies, we considered sample size $n=100, 200$ and $400$ and varied the feature dimension from  $p = 200 , 400$ to  $1000$. Each simulation setting was repeated 500 times.

For each simulated data, we used the proposed method to learn the prediction function. Initial values, tuning parameters  and optimization package used for binary case are chosen as in Remark 1 of Section 2, where 3-fold cross-validation was used for selecting the tuning parameters. The bound of regularized parameter $M$ was chosen to be $10^5$.  We also centerized continuous outcome and re-weighted class label controlled to be balanced before iteration to make numerical stable. We reported the true positive rates, true negative rates and the average number of the selected variables for feature selection. We also reported the prediction errors or misclassification rates using a large and independent validation data.
For comparison, we compared our proposed method with HSICLasso and SpAM since both methods were able to estimate nonlinear functions in high dimensional settings.
In addition, we also compared the performance with LASSO in the first simulation study and  $l_1$-SVM in the second simulation study, in order to study the impact due to model misspecification. 

The results based on 500 replicates are summarized in Tables 1 and 2.   From these tables, we observe that for fixed dimension, the performance of our method improves as sample size $n$ becomes large in terms of the improved true positive and true negative rates for feature selection as well as decreasing prediction errors. In almost all cases, our true negative rate is close to 100$\%$, which shows that noise variables can be identified with a very high chance. As expected, the performance deteriorates as the dimensionality increases. Interestingly, our method continues to select only a small number of feature variables.
Comparatively, HSICLasso selected many more noise variables and had larger prediction errors, while SpAM also tended to select more features than our method. The performance of these methods become much worse when the feature dimension is 1000.
Clearly, LASSO and $l_1$-SVM did not yield reasonable variable selection results and their prediction errors are much higher due to model misspecification.
We also give boxplots to visualize prediction performance of 500 replications in Figures 1 and 2. Since Lasso cannot provide stable prediction errors, its prediction errors from many replicates are out of the bound as shown in Figure 1. Figure 1 and 2 further confirm that our method is superior to all other methods, even when the dimension is as large as 1000 and the sample size is as small as $n=100$, which is of similar size as our real data analysis example in Section 5.

\begin{table}
\caption{Results from The Simulation Study with Continuous Outcome}

\begin{center}
\scalebox{0.75}{
\begin{tabular}{ccccccccccccccccc} 

{} & \multicolumn{16}{c}{(a) Summary of Feature Selection Performance}  \\
\hline
{} & {} & \multicolumn{3}{c}{Proposed Method}  & &  \multicolumn{3}{c}{HSICLasso} & &  \multicolumn{3}{c}{SPAM} & &  \multicolumn{3}{c}{LASSO}  \\
\cline{3-5}\cline{7-9}\cline{11-13}\cline{15-17} 
$p$  & $n$   & TPR   &  TNR & Avg$\#$  & & TPR & TNR & Avg$\#$   & & TPR & TNR & Avg$\#$   & & TPR & TNR & Avg$\#$ \\ 
\hline
100 & 100 & 60.9$\%$ & 97.3$\%$ & 5.6 & & 81.5$\%$ &78.5$\% $ &24.5 & & 99.6$\%$ &34.6$\% $ &67.1 & & 98.8$\%$ &1.3$\% $ &98.8 \\
   & 200 & 71.2$\%$   & 99.0$\%$ & 4.5 & & 98.0$\%$ & 60.4 $\%$ & 42.5 & & 100.0$\%$ &4.4$\% $ &95.8  & & 100.0$\%$ &0.1$\% $ &99.9 \\
   & 400 & 82.7$\%$    & 98.4$\%$ & 5.7 & & 99.6 $\%$ & 78.0 $\%$  & 25.8 & & 100.0$\%$ &0.3$\% $ &99.7 & & 100.0$\%$ &0.1$\% $ &99.9 \\ 
\hline
200 & 100 & 57.2$\%$ & 98.7$\%$ & 5.5  & & 75.6$\%$  & 88.8$\%$ &25.6   & & 99.1$\%$  & 63.0$\%$ &77.1 & & 84.0$\%$ &52.2$\% $ &97.5 \\
 & 200 & 66.6$\%$ & 99.5$\%$ & 4.2  & & 94.0$\%$ & 75.2$\%$ & 53.1 & & 100.0$\%$ & 33.7$\%$ & 134.1 & & 99.1$\%$ &0.0$\% $ &198.6 \\
   & 400 & 78.1$\%$   & 99.4$\%$ & 5.0 & & 99.8 $\%$ & 84.2$\%$ & 35.8 & & 100.0$\%$ & 5.4$\%$ & 189.5  & & 100.0$\%$ &0.12$\% $ &199.8  \\ 
\hline
400 & 100 & 47.3$\%$ & 99.3$\%$ & 5.2 & & 68.5$\%$ &90.4$\%$  &41.5 & & 98.2$\%$  & 80.8$\%$ &80.8 & & 79.4$\%$ &76.4$\% $ &97.1  \\
 & 200 & 65.0$\%$ & 99.7$\%$ & 4.5 & & 86.3$\%$ & 89.0$\%$ &47.6 & & 100.0$\%$  & 62.1$\%$ &154.6 & & 90.7$\%$ &51.4$\% $ &196.6 \\
 & 400 & 73.1$\%$ & 99.8$\%$ & 4.4 & & 99.7$\%$ & 87.6$\%$ &54.0 & & 100.0$\%$  & 34.0$\%$ &265.8  & & 99.1$\%$ &0.7$\% $ &397.3 \\
\hline
1000 & 100 & 40.7$\%$ & 99.7$\%$ & 5.0  & & 56.0$\%$ &91.8$\%$  &84.5 & & 93.7$\%$ &92.2$\%$  &82.6  & & 73.6$\%$ &90.6$\% $ &97.2 \\
 & 200 & 61.2$\%$ & 99.9$\%$ & 4.5  & & 78.2$\%$ & 98.6$\%$ &18.4 & & 99.9$\%$ &84.2$\%$  &162.0 & & 85.5$\%$ &80.7$\% $ &196.1 \\
  & 400 & 70.7$\%$ & 99.9$\%$ & 4.0 & & 99.4$\%$ & 91.0$\%$ &94.9 & & 100.0$\%$ &68.7$\%$  &316.4 & & 94.5$\%$ &60.7$\% $ &395.6 \\
  \hline   
\end{tabular}
}
\end{center}
\begin{center}
\scalebox{0.75}{
\begin{tabular}{ccccccc}
\\
 &  &  & \multicolumn{4}{c}{(b) Summary of Prediction Errors}\\ \hline
 $p$  & $n$  & & Proposed Method   &HSICLasso &SPAM &LASSO\\
 \hline
100 & 100 & & 7.405 (0.527)  &7.695 (0.291) & 6.985 (0.291) & 41.663 (11.325)\\
& 200& &  5.929 (0.950)  & 7.323 (0.098) & 7.299 (0.389) & 9.508 (0.575)\\
& 400& & 4.424 (0.777)   &7.115 (0.053)   &6.868 (0.292) & 7.840 (0.183)\\
\hline
200 & 100 & & 7.567 (0.493)  & 7.603 (0.286)  & 6.672 (0.336) &10.176 (0.794)\\
& 200 & & 6.623 (0.412) & 7.313 (0.130) & 6.404 (0.279) &44.464 (9.125)\\
&400 & & 5.661(0.580)  & 6.946 (0.054) & 6.767 (0.305) &   9.370 (0.433)\\
\hline
400 & 100 & & 7.920 (0.670) & 8.001 (0.284) & 6.815 (0.399) &  9.091 (0.532)\\
& 200 & & 7.008 (0.346) & 7.563 (0.233)  &6.222 (0.218) &  10.151 (0.722)\\
& 200 & & 6.444 (0.199) & 7.061 (0.049)  & 6.079 (0.192) &  40.190 (6.402)\\
\hline
1000 & 100 & & 8.215 (0.764) & 8.638 (0.263) & 7.067 (0.372) &  8.851 (0.406) \\
&200 & & 7.324 (0.368) & 7.539 (0.252)  & 6.214 (0.242) &  8.871 (0.379)\\
&400 & & 6.818 (0.250) & 7.376 (0.068)  & 5.870 (0.161) &  9.652 (0.429)\\
\hline
\end{tabular}
}
\end{center} 
{Note. In (a),  ``TPR" is the true positive rate,  ``TNR" is the true negative rate, and  ``Avg$\#$" is the average number of the selected variables from 500 replicates. In (b), the numbers are the mean squared errors from prediction, and the numbers within parentheses are the median absolute deviations from 500 replicates.}
\end{table}

\begin{table}
\caption{Results from The Simulation Study with Binary Outcome}

\begin{center}
\scalebox{0.75}{
\begin{tabular}{ccccccccccccccccc} 

{} & \multicolumn{16}{c}{(a) Summary of Feature Selection Performance}  \\
\hline
{} & {} & \multicolumn{3}{c}{Proposed Method}  & &  \multicolumn{3}{c}{HSICLasso} & &  \multicolumn{3}{c}{SPAM} & &  \multicolumn{3}{c}{$l_1$-SVM}  \\
\cline{3-5}\cline{7-9}\cline{11-13}\cline{15-17} 
$p$  & $n$   & TPR   &  TNR & Avg$\#$  & & TPR & TNR & Avg$\#$   & & TPR & TNR & Avg$\#$   & & TPR & TNR & Avg$\#$ \\ 
\hline
100 & 100 & 74.7$\%$ & 99.0$\%$ & 3.3 & & 71.1$\%$ &79.4$\% $ &22.1  & & 64.5$\%$ &89.6$\% $ &12.1 & &76.2$\%$  & 75.1$\%$ & 26.5 \\
   & 200 & 83.9$\%$   & 99.9$\%$ & 2.6 & & 80.7$\%$ & 89.7 $\%$ & 12.4  & & 53.4$\%$ &98.9$\% $ &2.6  & & 92.5$\%$ &80.4$\% $ &21.8 \\
   & 400 &  86.0$\%$ &  99.9$\%$& 2.6  & & 87.8$\%$   & 90.3$\%$ & 12.1  & & 50.6$\%$ &99.9$\% $ &1.5  & & 98.8$\%$ &71.3$\% $ &30.8\\
   \hline
   200 & 100 & 70.4$\%$ & 99.3$\%$ & 3.5 & & 71.3$\%$ &80.1$\% $ &41.3  & & 63.6$\%$ &91.2$\% $ &19.2 & &  71.3$\%$ & 85.4$\%$  &31.0  \\
   & 200 & 84.1$\%$   & 99.8$\%$ & 2.9 & & 78.3$\%$ & 95.0 $\%$ & 12.2  & & 54.5$\%$ &98.7$\% $ &4.1  & & 90.7$\%$ &80.2$\% $ &41.8 \\
   & 400 & 87.0$\%$    & 100.0$\%$ & 2.7 & & 83.1 $\%$ & 96.5$\%$  & 9.3  & & 50.6$\%$ &99.9$\% $ &1.6 & & 89.3$\%$ &73.4$\% $ &55.0 \\ 
   \hline
   400 & 100 & 68.5$\%$ & 99.5$\%$ & 3.9 & & 70.9$\%$ &79.4$\% $ &84.0  & & 63.7$\%$ &92.8$\% $ &30.6 & &  65.9$\%$ & 86.7$\%$  &54.7  \\
   & 200 & 84.5$\%$   & 99.9$\%$ & 3.0 & & 76.9$\%$ & 95.5 $\%$ & 20.2 & & 57.7$\%$ &98.3$\% $ &8.3  & & 87.0$\%$ &91.0$\% $ &38.1  \\
   & 400 & 87.0$\%$    & 100.0$\%$ & 2.6 & & 79.6 $\%$ & 98.9$\%$  &6.8 & & 51.9$\%$ &100.0$\% $ &1.8 & & 99.1$\%$ &82.3$\% $ &73.0 \\ 
   \hline
   1000 & 100 & 61.3$\%$ & 99.8$\%$ & 4.1 & & 72.2$\%$ & 77.4$\%$ & 227.4 & & 61.0$\%$  & 95.7 $\%$ & 45.0  & & 58.4$\%$  &  90.3 $\%$ & 98.9   \\
   & 200 & 86.3$\%$   & 99.9$\%$ & 3.3 & & 75.5 $\%$ & 95.9 $\%$ & 43.6  & & 54.3$\%$  &  98.9 $\%$ &  12.8  & &  79.4$\%$ & 91.4$\%$& 87.4 \\
   & 400 & 87.7$\%$    & 100.0$\%$ & 2.8 & & 73.9 $\%$ &  99.6 $\%$ & 6.6 & &50.0$\%$   & 100.0$\%$ & 1.9  & &  96.8$\%$ & 90.1 $\%$ & 101.6   \\ 
   \hline
\end{tabular}
}
\end{center}
\begin{center}
\scalebox{0.75}{
\begin{tabular}{ccccccc}
\\
 &  &  & \multicolumn{4}{c}{(b) Summary of Misclassification Errors}\\ \hline
 $p$  & $n$  & & Proposed Method  &HSICLasso &SPAM &$l_1$-SVM\\
 \hline
100 & 100 & & 0.314 (0.017)  &0.345 (0.028) & 0.343 (0.018)  &  0.359 (0.032)\\
& 200& &  0.290 (0.009)  & 0.307 (0.012) & 0.312 (0.002) & 0.305 (0.011)\\
& 400& & 0.283 (0.004)   & 0.292 (0.012) & 0.297 (0.002) & 0.292 (0.007) \\
\hline
200 & 100 & & 0.316 (0.019)  &0.351 (0.042) & 0.344 (0.034)  & 0.352 (0.031)\\
& 200 & &  0.280 (0.008)  & 0.302 (0.015) & 0.302 (0.003) & 0.321 (0.028)\\
&400 & & 0.270 (0.004)   & 0.282 (0.014)   &0.297 (0.002) & 0.326 (0.025)\\
\hline
400 & 100 & & 0.331 (0.024)  &0.372 (0.047) & 0.369 (0.046)  & 0.390 (0.031)\\
& 200 & &  0.286(0.010)  & 0.319 (0.018) & 0.311 (0.003) & 0.327 (0.026)\\
& 200 & & 0.277 (0.004)   & 0.288 (0.014)   &0.305 (0.001) & 0.295 (0.010)\\
\hline
1000 & 100 & & 0.352 (0.027)  & 0.397 (0.037) &  0.390 (0.036)  &0.416 (0.027)  \\
&200 & &  0.287 (0.008)  & 0.335 (0.024) & 0.315 (0.003) & 0.381 (0.020)\\
&400 & & 0.277 (0.004)   &  0.294 (0.008)  & 0.305 (0.001) & 0.353 (0.016)\\
\hline
\end{tabular}
}
\end{center} 
{Note. See Table 1.}
\end{table}

\begin{figure}
\caption{Boxplots of Prediction Errors for  Continuous Outcome }
\begin{center}
\centering
\includegraphics[width=3.3in, height=2.5in]{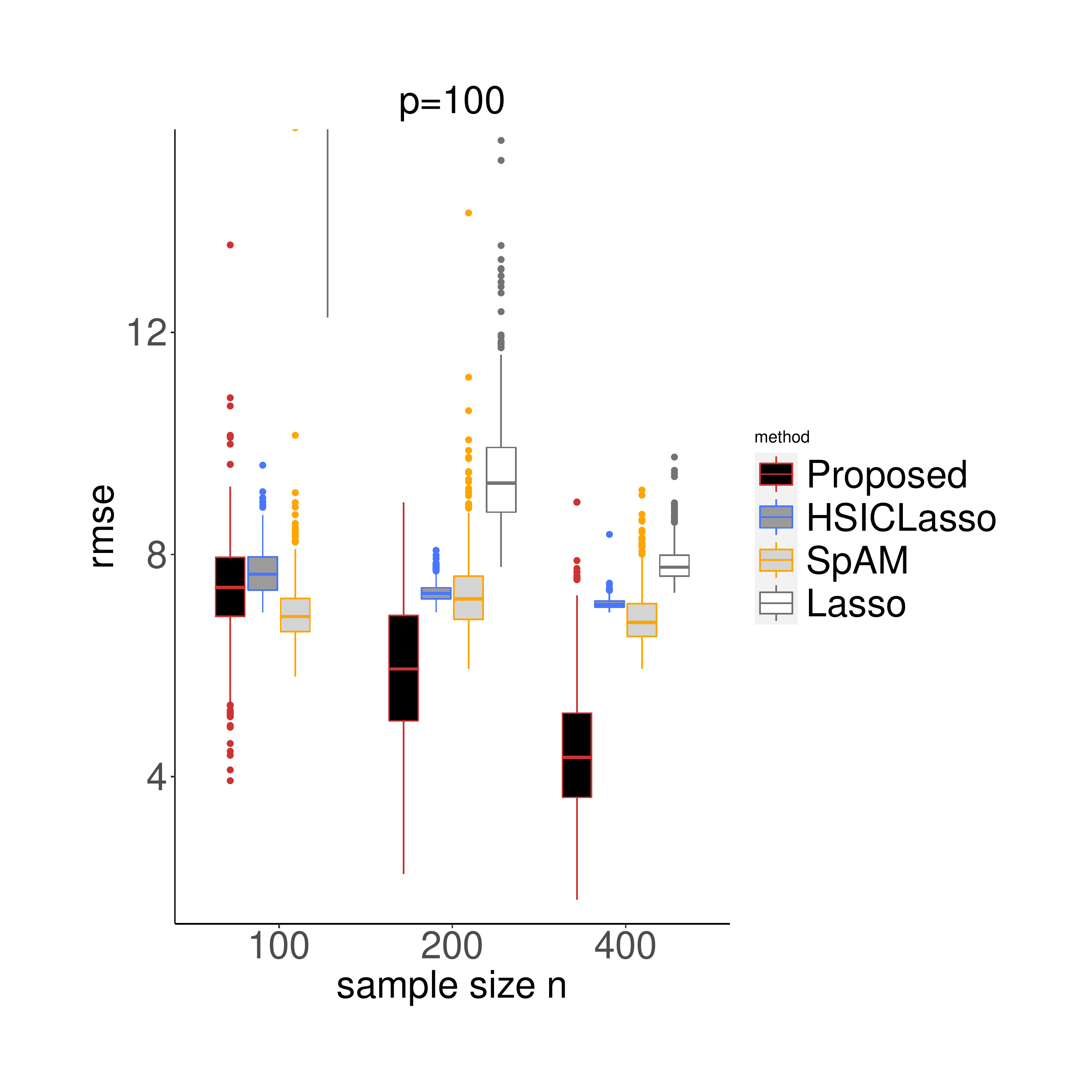}
\centering
\includegraphics[width=3.3in, height=2.5in]{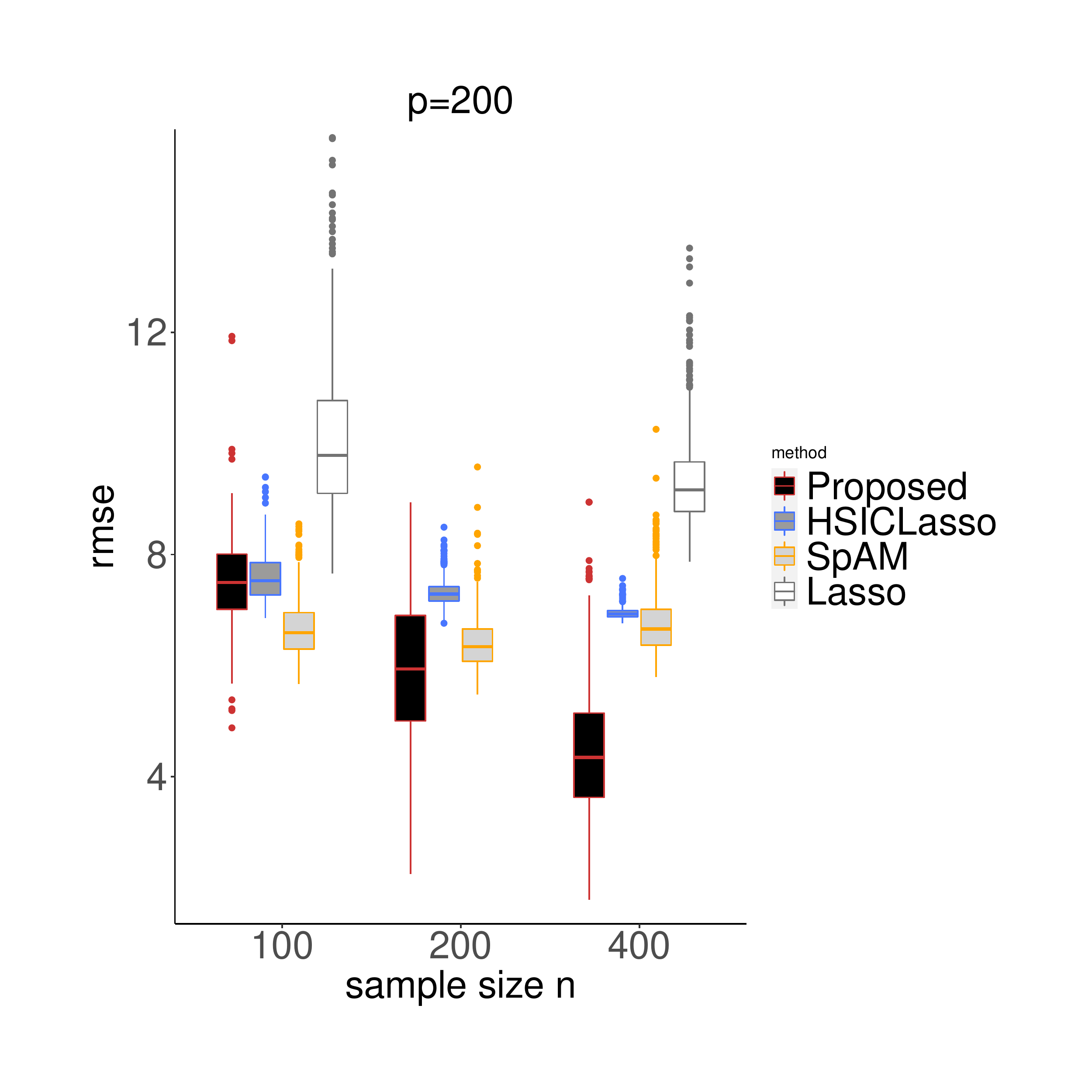}
\centering
\includegraphics[width=3.3in, height=2.5in]{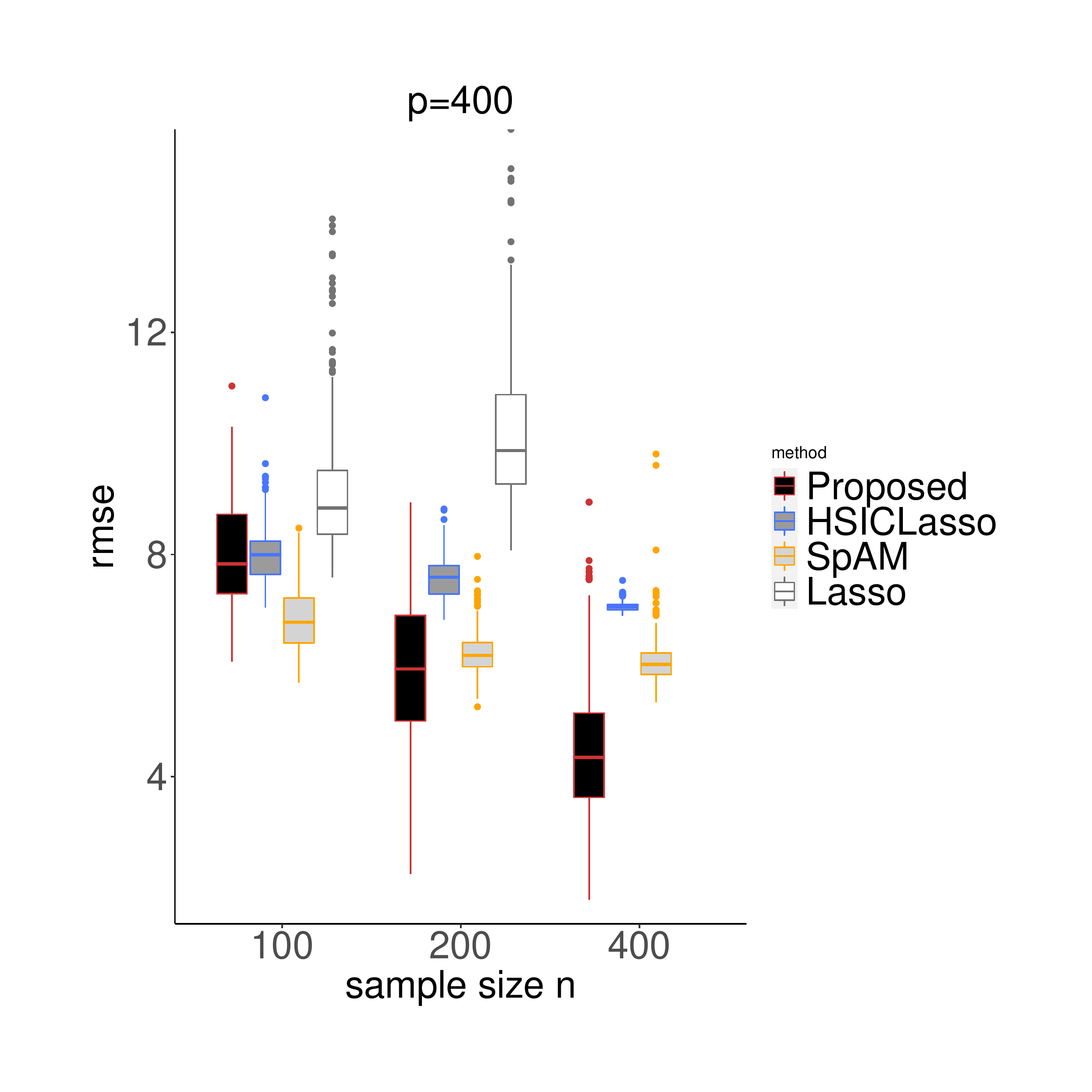}
\centering
\includegraphics[width=3.3in, height=2.5in]{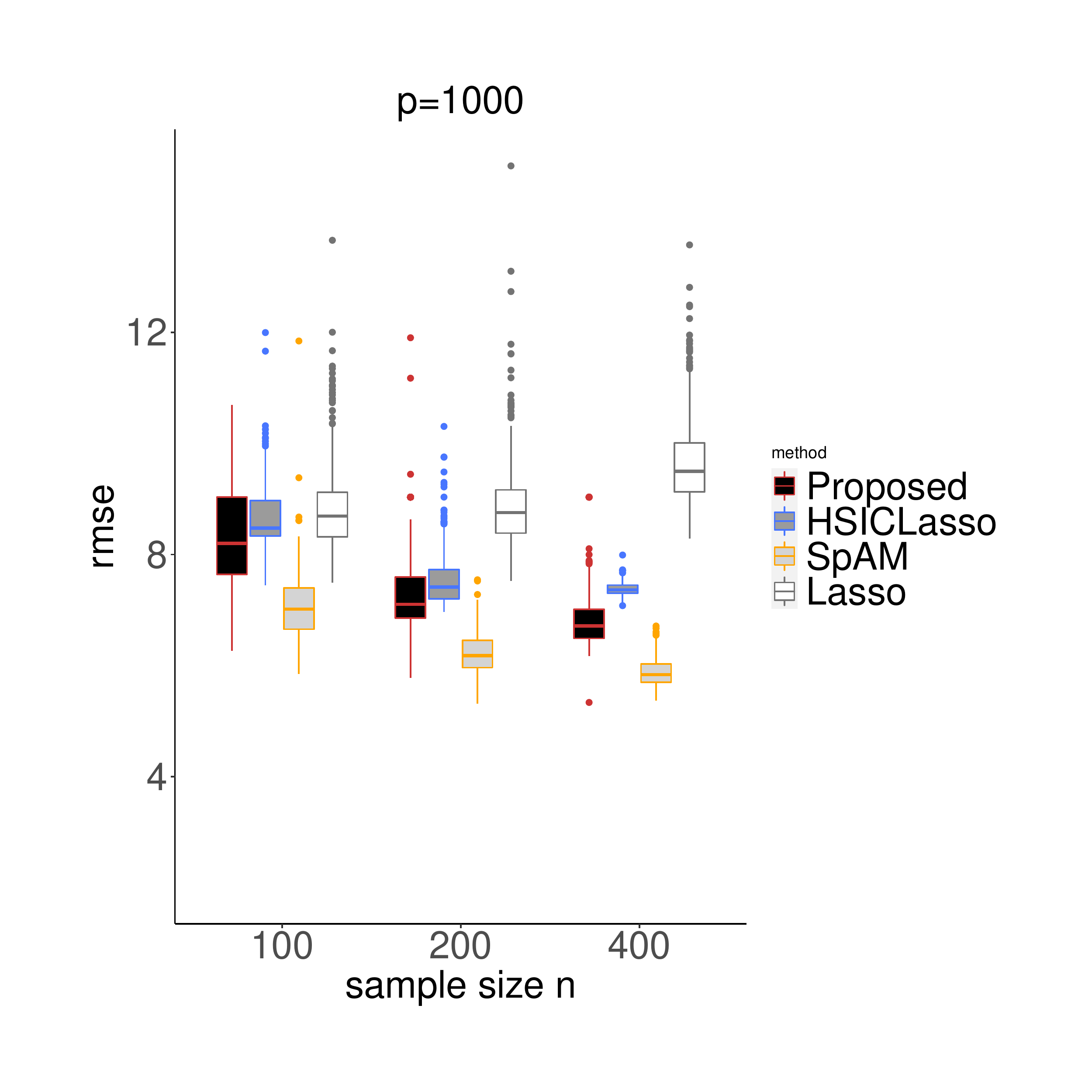}
\end{center}
{Note. The plots give the distribution of prediction errors among four competing methods. The comparing methods from left to right in each plot are our proposed method, HSICLasso, SpAM and Lasso. }
\end{figure}

\begin{figure}
\caption{Boxplots of Misclassification Errors for Binary Outcome }

\begin{center}
\includegraphics[width=3.3in, height=2.5in]{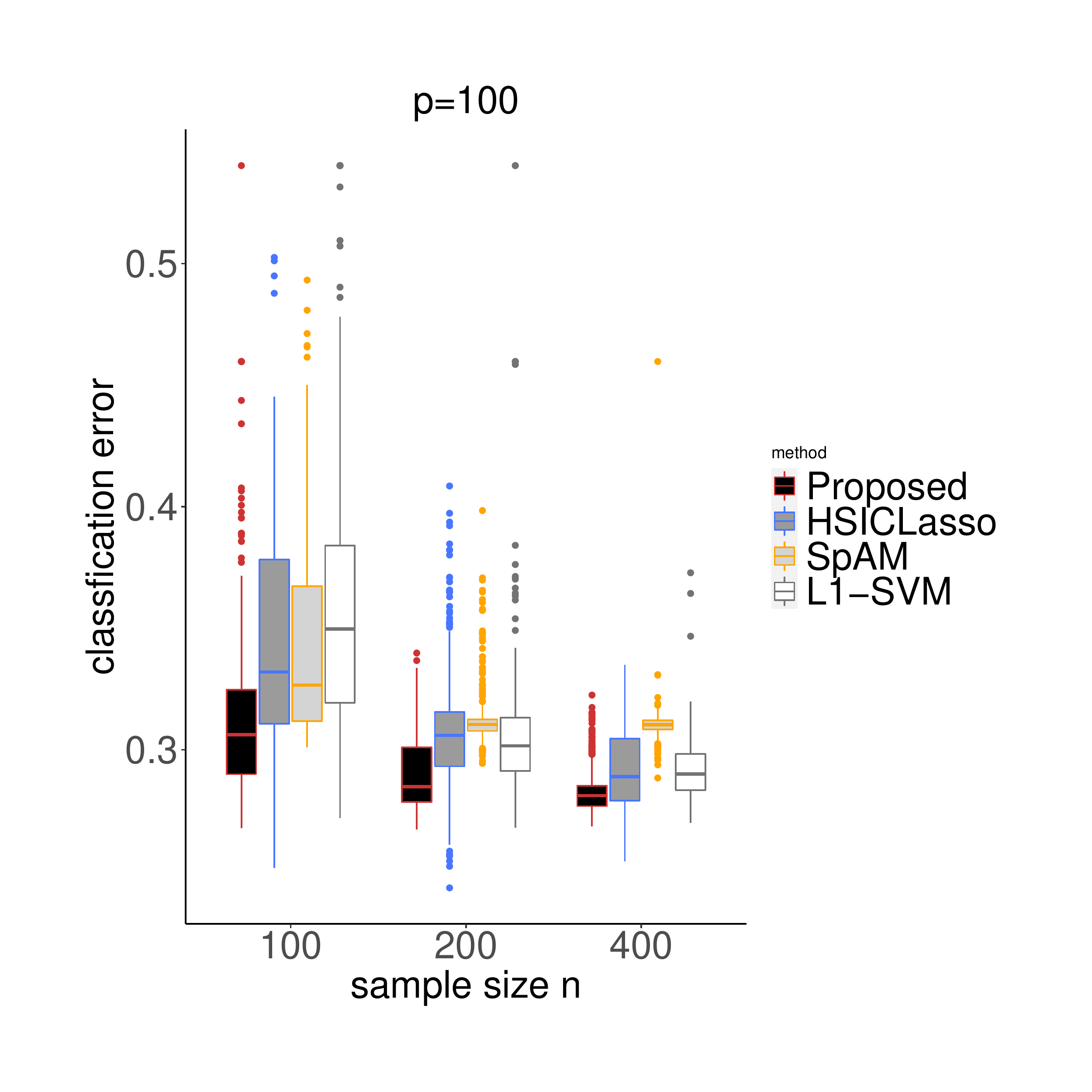}
\includegraphics[width=3.3in, height=2.5in]{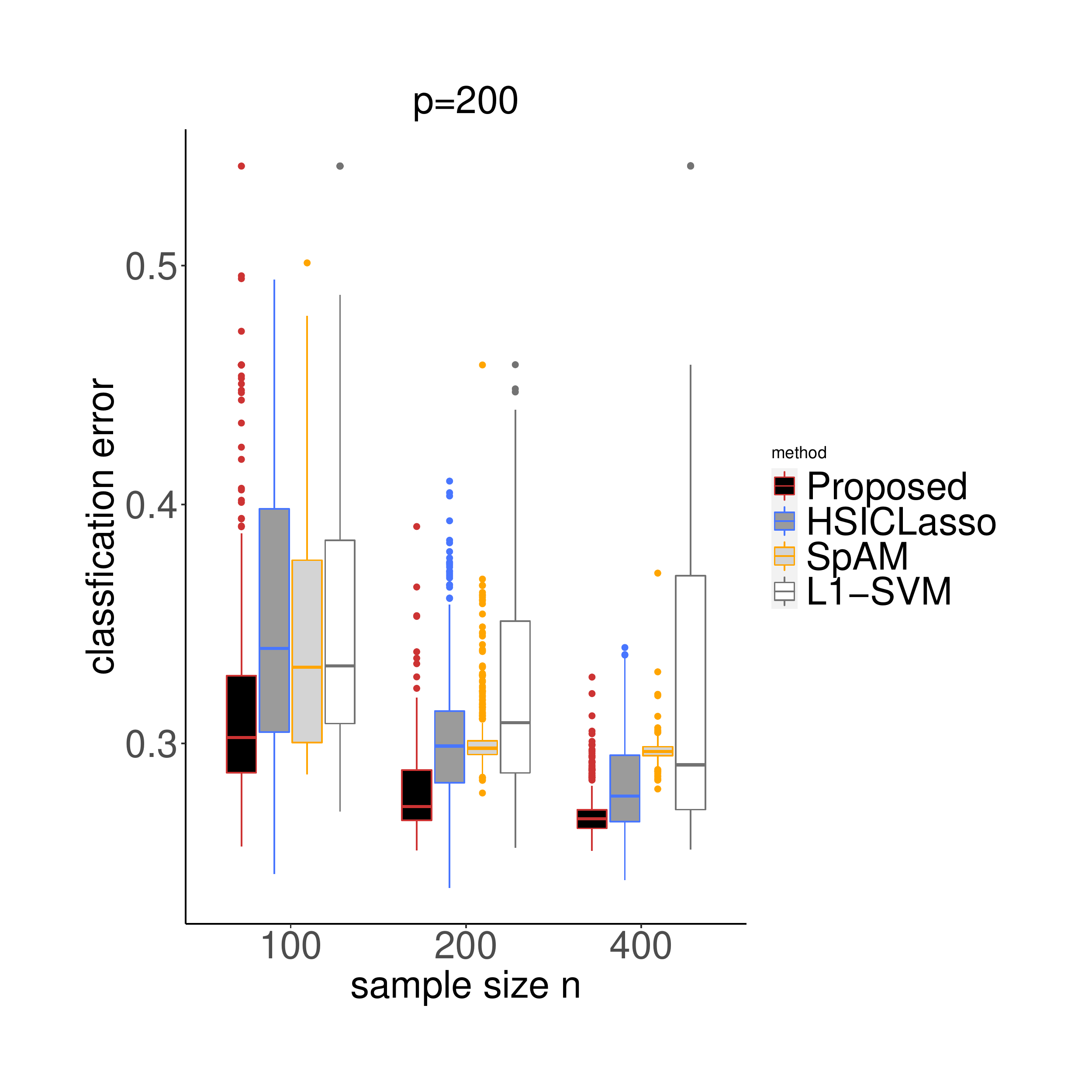}
\includegraphics[width=3.3in, height=2.5in]{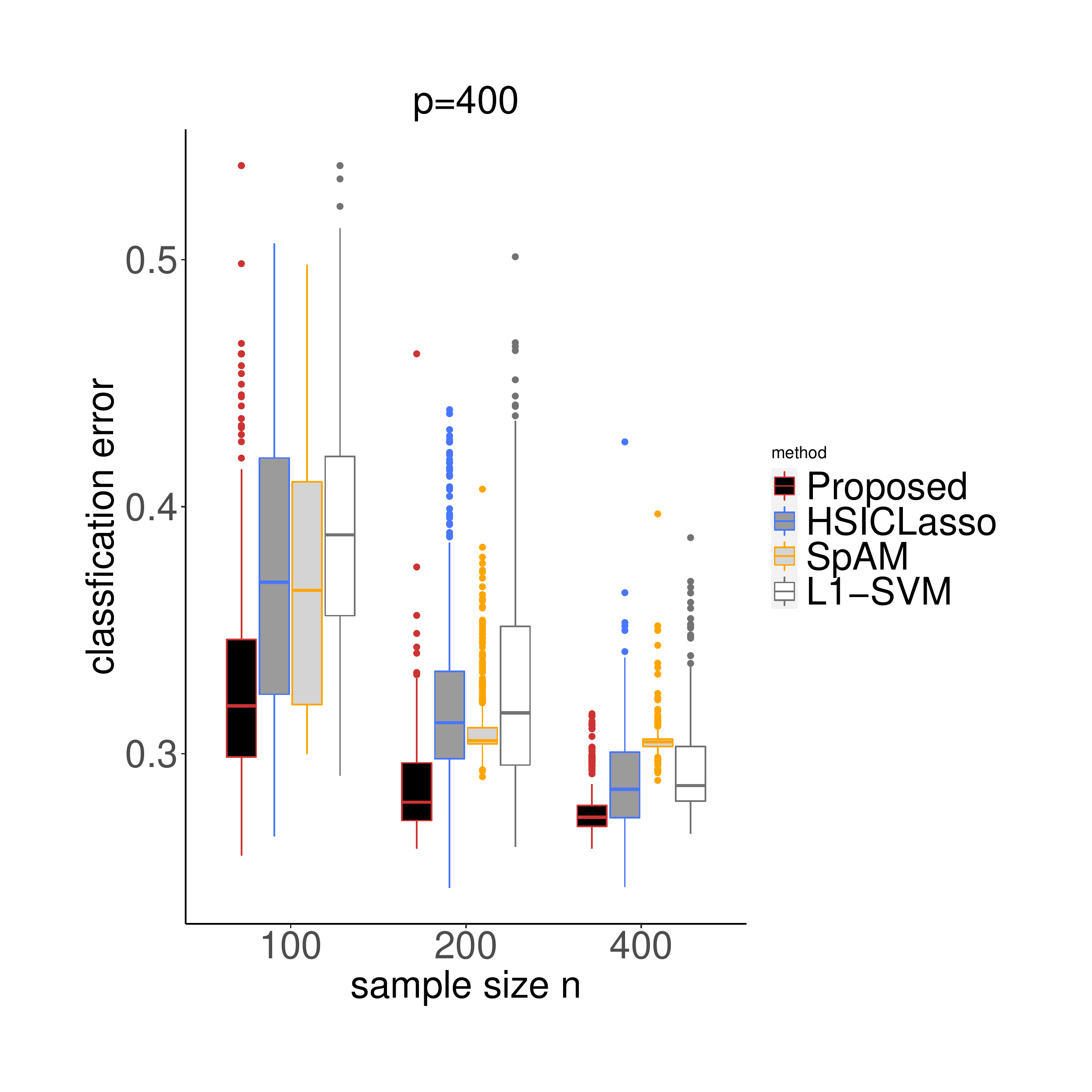}
\includegraphics[width=3.3in, height=2.5in]{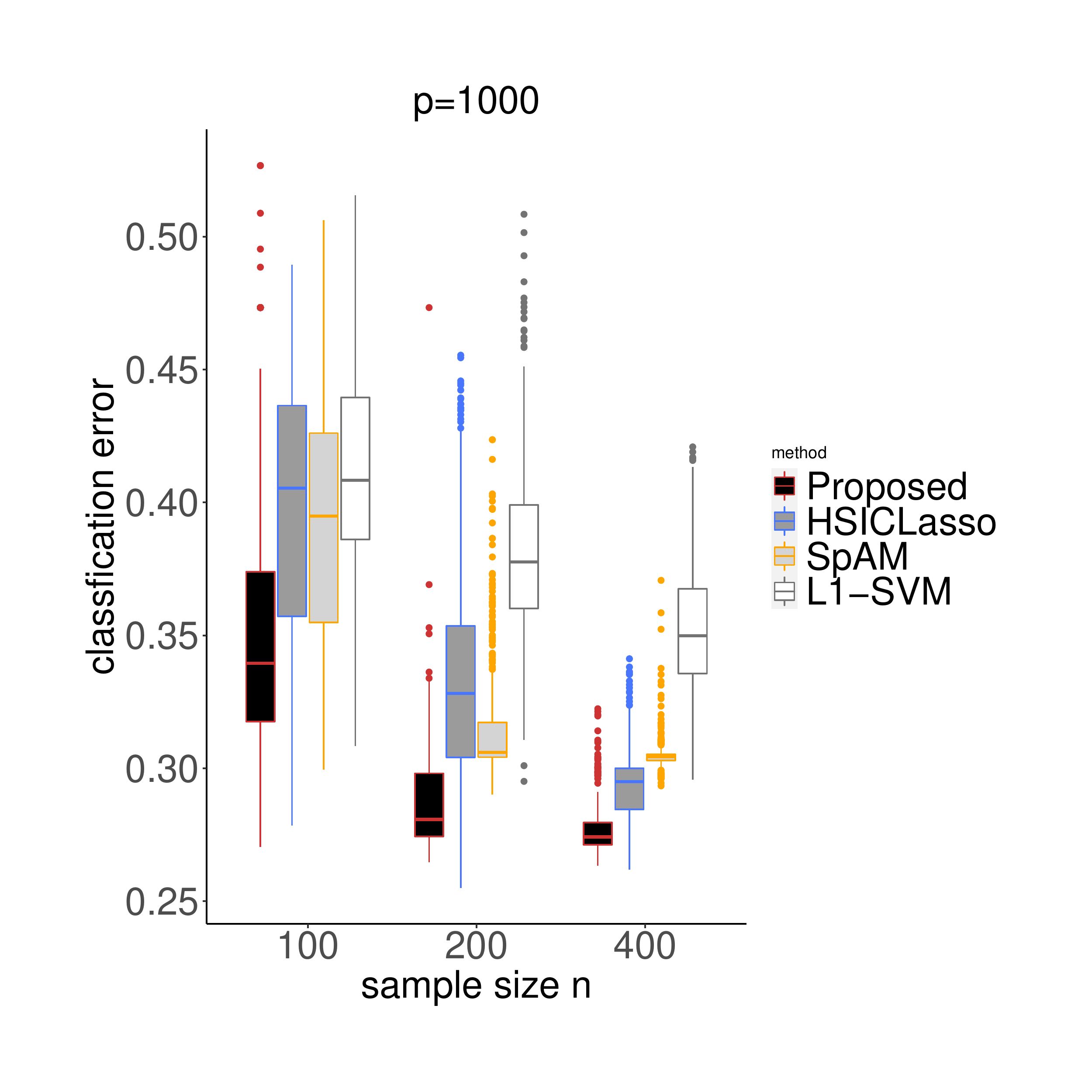}
\end{center}

{Note. The plots give the distribution of misclassification rates among four competing methods. The comparing methods from left to right in each plot are our proposed method, HSICLasso, SpAM and $l_1$-SVM.}
\end{figure}

\section{Application}
We applied our proposed method to analyze a  gene expression study in \citet{Scheetz2006}. This study analyzed microarrays RNAs  of eye disease from 120 male rats, containing the expression levels from about $31,000$ gene probes.  One interesting question was to determine which probes might be associated with the expression of gene TRIM32, which had been implicated in a number of diverse biological pathways and also known to be one of 14 genes linked to Bardet-Biedl syndrome \citep{TRIM32}.
For this purpose, we dichotomized TRIM32  based on whether it was over expressed as compared to a reference sample in the dataset. We further restricted our feature variables to the top 1000 probe sets that were most correlated with TRIM32. All feature variables were on a log-scale and standardized in the analysis.
To examine the performance of our method, we randomly divided the whole sample so that $70 \%$ were used for training and the rest were used for testing. This random splitting was then repeated 500 times to obtain reliable results.
For each training data, we used 3-fold cross validation to choose tuning parameters. We also applied HSICLasso, SpAM and $l_1$-SVM for comparison.

The analysis results are shown in Table 3. We notice that our method gives almost the same classification error as $l_1$-SVM, which is the smallest on average. However,  our method selects a much smaller set of feature variables with an average of 5 variables. SpAM selects 13 variables on average but its classification error is higher.
In Table 4,  we report the top 10 most-frequent selected features among all 500 replications for each method. We notice that some features such as Fbxo7 and LOC102555217 were selected by at least three methods. In addition, 
Gene Sirt 3 was identified by all three nonlinear feature selection methods, but not  $l_1$-SVM, indicating some possible nonlinear relationship between Sirt 3 and TRIM32. 
In fact, Figure 3  reveals some nonlinear relationship between Sirt 3 and Fbxo7 using 5-Nearest-Neighbors model. Our method also selected some genes that were not identified by any other method.   We  applied our method to analyze the whole sample and obtained a training error of $21.9 \%$  along five 5 genes identified (Fbxo7, Plekha6, Nfatc4, 1375872 and 1388656), which were all selected as the top 10 genes in the previous random splitting experiment. 
\begin{table}
\caption{Summary of Feature Selection Results in The Real Data Application}
\begin{center} 
\begin{tabular}{l*{5}{c}}
\hline
 & min \# & max \# & avg \# & classification error \\ \hline
Proposed  Method   & 2      & 13      & 5.1     & 0.286 (0.057)                 \\ 
HSICLasso      & 1      & 1000      & 250.3     & 0.293 (0.046)              \\ 
SpAM      & 1      & 26     & 12.3      & 0.316 (0.057)                    \\ 
$l_1$-SVM     & 7      & 990      & 448.7      & 0.283 (0.058)                   \\ \hline
\end{tabular}
\end{center} 
{Note. The numbers are the mean  of misclassification rates from 500 replicates. The numbers within parentheses are the median absolute deviations from 500 replicates. ``min$\#$" is the minimum number of the selected features, ``max$\#$" is the max number of the selected features, and ``avg.$\#$" is the average number of  the selected features.}
\end{table}

\begin{table}
\caption{Top 10 Most Selected Genes for Each Method Based on 500 Random Splittings }
\begin{center}
\begin{tabular}{cccc}
\hline
               Proposed Method & HSICLasso & SpAM &  $l_1$-SVM \\
\hline
\multirow{10}{*}{}
                                             \makecell{{{\color{red}{\bf Fbxo7}}} \\ (67.5$\%$) }  &  \makecell{ Ska1 \\ (76.6$\%$)}  &   \makecell{1388491 \\ (46.2$\%$) } &   \makecell{ 1376747\\ (99.1$\%$) } \\
                  
                                              \makecell{{{\color{blue}{\bf Plekha6}}} \\(47.3$\%$) } &   \makecell{{{\color{green}{\bf Sirt3}}} \\ (76.2$\%$) }  & \makecell{ {{\color{red}{\bf Fbxo7}}}  \\ (37.8$\%$) }   &  \makecell{{{\color{cyan}{\bf 1390538}}}\\ (98.9$\%$) } \\
                                  
                                             \makecell{{{\color{brown}{\bf LOC102555217}}} \\ (24.5$\%$)}    &  \makecell{ Ddx58 \\ (76.2$\%$) }  &  \makecell{Slco1c1 \\ (36.6$\%$) }  &   \makecell{ RragB  \\ (98.6$\%$) } \\
                                 
                                               \makecell{Nfatc4 \\ (22.7$\%$) } &  \makecell{1371610 \\ (76.0$\%$)}   & \makecell{ Stmn1 \\ (35.4$\%$)  } &   \makecell{ Atl1 \\ (97.9$\%$) }  \\
                                          
                                              \makecell{{{\color{cyan}{\bf 1390538}}} \\ (20$\%$) }  &  \makecell{ LOC100912578 \\ (73.2$\%$)}   & \makecell{1373944 \\ (32.4$\%$) }      &   \makecell{ {{\color{red}{\bf Fbxo7}}} \\ (97.3$\%$) }     \\
                                  
                                           \makecell{{{\color{magenta}{\bf 1375872}}}\\ (20$\%$) }  &   \makecell{Ttll7\\  (70.4$\%$) } & \makecell{  Ufl1 \\ (32.2$\%$) }
      &   \makecell{{{\color{blue}{\bf Plekha6}}}  \\(95.1$\%$)  }  \\
                             
                                              \makecell{{{\color{ teal}{\bf RGD1306148}}} \\(13.4$\%$) } &  \makecell{ Decr1 \\ (70.4$\%$) } & \makecell{ LOC100912578 \\ (31.0$\%$) }  &    \makecell{{{\color{magenta}{\bf 1375872}}}\\ (94.8$\%$)} \\
                                           
                                             \makecell{{{\color{green}{\bf Sirt3}}} \\(11.6$\%$) }  &   \makecell{Mff \\ (68.0$\%$) } & \makecell{ LOC100911357 \\ (28.6$\%$) } &  \makecell{{{\color{ teal}{\bf RGD1306148}}}\\ (94.1$\%$) }  \\
                                           
                                                \makecell{Prpsap2 \\(11.4$\%$) } &  \makecell{Pkn2 \\ (67.0$\%$) }   &\makecell{{{\color{brown}{\bf LOC102555217}}} \\ (26.8$\%$) }  &  \makecell{ Ska1 \\ (93.6$\%$) }  \\
                                           
                                                \makecell{1388656 \\(10.2$\%$) }&  \makecell{Taf11 \\(65.0$\%$) }  & \makecell{{{\color{green}{\bf Sirt3}}} \\ (22.4$\%$) }   &  \makecell{{{\color{brown}{\bf LOC102555217}}} \\ (93.2$\%$) } \\
                                          
                                           \hline
\end{tabular}
\end{center}
{Note. The numbers within parentheses are the frequencies to be selected in 500 random splittings. The genes also selected by the proposed method are highlighted in boldface.}
\end{table}


\begin{figure}
\caption{5-Nearest-Neighbor Plot of Sirt3 versus Fbxo7 in Real Data Study}
\centering
\includegraphics[scale = 0.5]{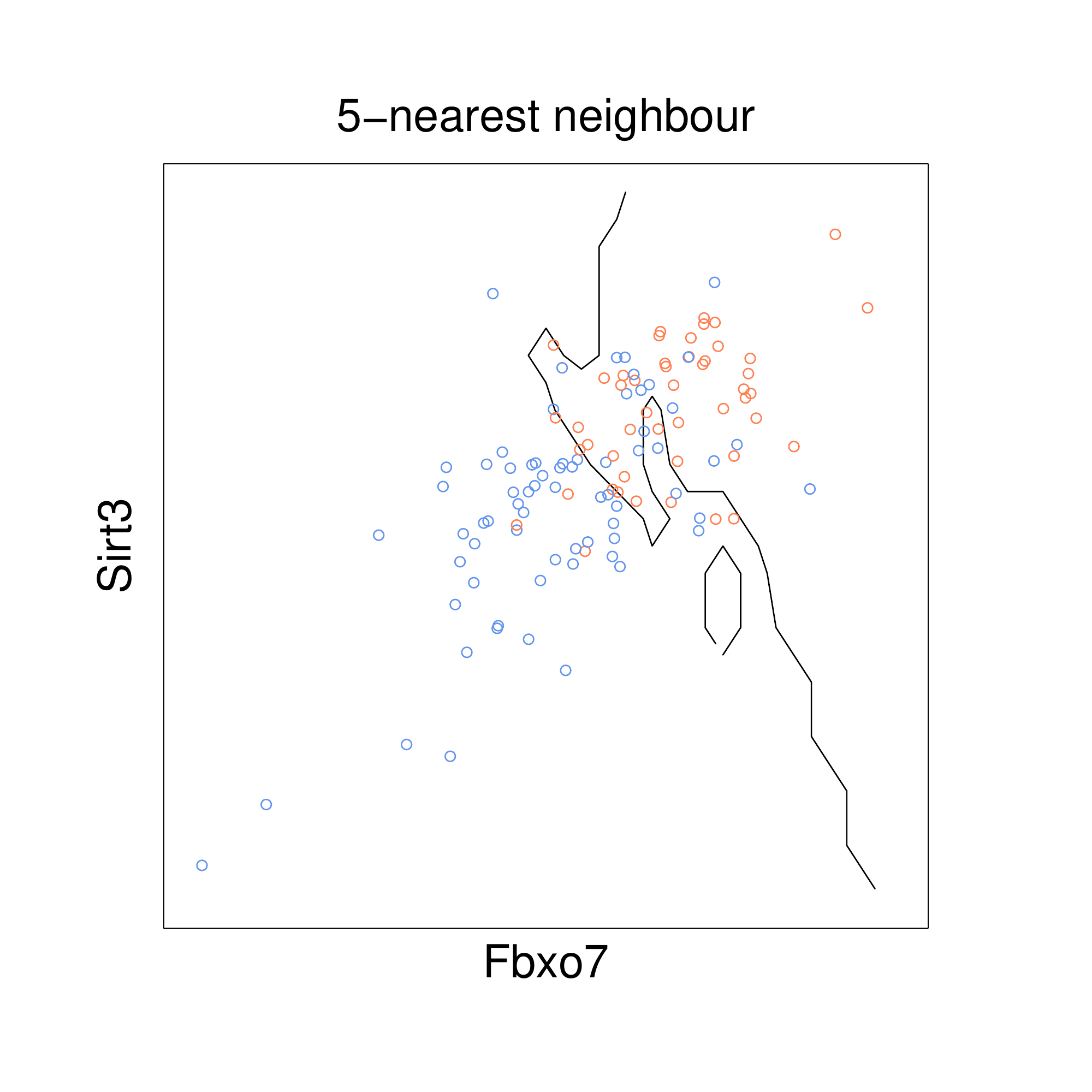}
\end{figure}

\section{Discussion}
In this work, we have proposed a general framework for nonparametric feature selection for both regression and classification in high dimensional settings. We introduced  a novel tensor product kernel for empirical risk minimization. This kernel led to fully nonparametric estimation for the prediction function but  allowed the importance of each feature to be captured by a non-negative parameter in the kernel function. Our approach is computationally efficient because it iteratively solves a convex optimization problem in a coordinate descent manner. We have shown that the proposed method has theoretical oracle property for variable selection. The superior performance of the proposed method was demonstrated via simulation studies and a real data application with a large number of feature variables.

We considered $l_2$ loss function for regression and exponential loss function for classification as examples. Clearly, the proposed framework applies to feature selection under many different loss functions in machine learning field. Another extension is to incorporate structures of feature variables in constructing the kernel function. For example, in integrative data analysis, feature variables arise from many different domains such as clinical domain, DNA, RNA, imaging and nutrition. It will be interesting to construct a hieachical kernel function which can not only identify feature variables within each domain but also identify important domains at the same time.

Our framework of nonparametric feature selection can be generalized to precision medicine where one of the main goals is to identify predictive biomarkers for treatment response. We can adopt loss functions used for precision medicine in our proposed method to simultaneously accomplish variable selection and discovering optimal individual treatment rules. Extensions to categorical outcomes and multi-stage treatment rule estimation are also possible under our general framework, which can be pursued in future work.

\bibliographystyle{biom}
\bibliography{ref}

\begin{thebibliography}{}

\bibitem[\protect\citeauthoryear{Allen}{Allen}{2013}]{allen2013kernel}
Allen, G.~I. (2013).
\newblock Automatic feature selection via weighted kernels and regularization.
\newblock {\em Journal of Computational and Graphical Statistics} {\bf 22,}
  284--299.

\bibitem[\protect\citeauthoryear{Fan, Feng, and Song}{Fan
  et~al.}{2011}]{fan2011nonparametric}
Fan, J., Feng, Y., and Song, R. (2011).
\newblock Nonparametric independence screening in sparse ultra-high dimensional
  additive models.
\newblock {\em Journal of the American Statistical Association} {\bf 106,}
  544--557.

\bibitem[\protect\citeauthoryear{Fan and Li}{Fan and Li}{2001}]{fan2001scad}
Fan, J. and Li, R. (2001).
\newblock Variable selection via nonconcave penalized likelihood and its oracle
  properties.
\newblock {\em Journal of the American Statistical Association} {\bf 96,}
  1348--1360.

\bibitem[\protect\citeauthoryear{Fan and Lv}{Fan and Lv}{2008}]{fan2008sure}
Fan, J. and Lv, J. (2008).
\newblock Sure independence screening for ultra-high dimensional feature space.
\newblock {\em Journal of the Royal Statistical Society: Series B} {\bf 70,}
  849--911.

\bibitem[\protect\citeauthoryear{Gao and Wu}{Gao and Wu}{2012}]{gao2012iwiee}
Gao, C. and Wu, X. (2012).
\newblock Kernel support tensor regression.
\newblock {\em 2012 International Workshop on Information and Electronics
  Engineering (IWIEE)} {\bf 29,} 3986--3990.

\bibitem[\protect\citeauthoryear{Guyon and Elisseeff}{Guyon and
  Elisseeff}{2003}]{guyon2003selection}
Guyon, I. and Elisseeff, A. (2003).
\newblock An introduction to variable and feature selection.
\newblock {\em Journal of Machine Learning Research} {\bf 3,} 1157--1182.

\bibitem[\protect\citeauthoryear{Huang, Horowitz, and Wei}{Huang
  et~al.}{2010}]{Huang2010}
Huang, J., Horowitz, J.~L., and Wei, F. (2010).
\newblock Variable selection in nonparametric additive model.
\newblock {\em The Annals of Statistics} {\bf 38,} 2282–2313.

\bibitem[\protect\citeauthoryear{Jaakkola, Diekhans, and Haussler}{Jaakkola
  et~al.}{1999}]{jaakkola1999fisher}
Jaakkola, T., Diekhans, M., and Haussler, D. (1999).
\newblock Using the fisher kernel method to detect remote protein.
\newblock {\em ISMB} {\bf 99,} 149--158.

\bibitem[\protect\citeauthoryear{LI, PENG, ZHANG, and ZHU}{LI
  et~al.}{2012}]{RRCS}
LI, G., PENG, H., ZHANG, J., and ZHU, L. (2012).
\newblock Robust rank correlation based screening.
\newblock {\em Annals of Statistics} {\bf 40,} 1846--1877.

\bibitem[\protect\citeauthoryear{Lin and Zhang}{Lin and
  Zhang}{2006}]{Lin2006cosso}
Lin, Y. and Zhang, H.~H. (2006).
\newblock Component selection and smoothing in multivariate nonparametric
  regression.
\newblock {\em The Annals of Statistics} {\bf 34,} 2272--2297.

\bibitem[\protect\citeauthoryear{Locke, Tinsley, Benso, and Blake}{Locke
  et~al.}{2009}]{TRIM32}
Locke, M., Tinsley, C.~L., Benso, M.~A., and Blake, D.~J. (2009).
\newblock Trim32 is an e3 ubiquitin ligase for dysbindin.
\newblock {\em Human Molecular Genetics} {\bf 18,} 2344–2358.

\bibitem[\protect\citeauthoryear{Ravikumar, Lafferty, Liu, and
  Wasserman}{Ravikumar et~al.}{2009}]{SPAM}
Ravikumar, P., Lafferty, J., Liu, H., and Wasserman, L. (2009).
\newblock Sparse additive models.
\newblock {\em Journal of the Royal Statistical Society: Series B} {\bf 101,}.

\bibitem[\protect\citeauthoryear{Scheetz, Kim, Swiderski, and Philp}{Scheetz
  et~al.}{2006}]{Scheetz2006}
Scheetz, T.~E., Kim, K.-Y.~A., Swiderski, R.~E., and Philp, A.~R. (2006).
\newblock Regulation of gene expression in the mammalian eye and its relevance
  to eye disease.
\newblock {\em Proc Natl Acad Sci U S A.} {\bf 103,} 14429–14434.

\bibitem[\protect\citeauthoryear{Song, Smola, Gretton, Bedo, and
  Borgwardt}{Song et~al.}{2012}]{song2012dependence}
Song, L., Smola, A., Gretton, A., Bedo, J., and Borgwardt, K. (2012).
\newblock Feature selection via dependence maximization.
\newblock {\em Journal of Machine Learning Research} {\bf 13,} 1393--1434.

\bibitem[\protect\citeauthoryear{Steinwart and Scovel}{Steinwart and
  Scovel}{2007}]{steinwart2007gaussian}
Steinwart, I. and Scovel, C. (2007).
\newblock Fast rates for support vector machines using gaussian kernels.
\newblock {\em Annals of Statistics} {\bf 35,} 575--607.

\bibitem[\protect\citeauthoryear{Tibshirani}{Tibshirani}{1996}]{tibshirani1996lasso}
Tibshirani, R. (1996).
\newblock Regression shrinkage and selection via the lasso.
\newblock {\em Journal of the Royal Statistical Society: Series B} {\bf 58,}
  267--288.

\bibitem[\protect\citeauthoryear{Urbanowicz, Meeker, Cava, and
  Olson}{Urbanowicz et~al.}{2018}]{urbanowicz2018rbf}
Urbanowicz, R.~J., Meeker, M., Cava, W.~L., and Olson, R.~S. (2018).
\newblock Relief-based feature selection: Introduction and review.
\newblock {\em Journal of Biomedical Informatics} {\bf 85,} 189--203.

\bibitem[\protect\citeauthoryear{Wang and Kulasekera}{Wang and
  Kulasekera}{2012}]{wang2012parametric}
Wang, D. and Kulasekera, K. (2012).
\newblock Parametric component detection and variable selection in
  varying-coefficient partially linear models.
\newblock {\em Journal of Multivariate Analysis} {\bf 112,} 118--129.

\bibitem[\protect\citeauthoryear{Yamada, Jitkrittum, Sigal, Xing, and
  Sugiyama}{Yamada et~al.}{2014}]{yamada2014HSIC}
Yamada, M., Jitkrittum, W., Sigal, L., Xing, E.~P., and Sugiyama, M. (2014).
\newblock High-dimensional feature selection by feature-wise non-linear lasso.
\newblock {\em Neural Computation} {\bf 26,} 185--207.

\bibitem[\protect\citeauthoryear{Zhang}{Zhang}{2010}]{zhang2010MCP}
Zhang, C. (2010).
\newblock Nearly unbiased variable selection under minimax concave penalty.
\newblock {\em The Annals of Statistics} {\bf 38,} 894--942.

\end{thebibliography}

\end{document}